\definecolor{myurlcolor}{rgb}{0,0,0.7}
\newcommand{\tinyspace}{\mspace{1mu}}
\newcommand{\proj}[1]{| #1\rangle\!\langle #1 |}
\newcommand{\Br}[1]{\left[#1\right]}
\newcommand{\abs}[1]{\left\lvert\tinyspace #1 \tinyspace\right\rvert}
\newcommand{\norm}[1]{\left\lVert #1 \right\rVert}
\theoremstyle{plain}
\newtheorem{thm}{Theorem}
\newtheorem{lem}[thm]{Lemma}
\newtheorem{prop}[thm]{Proposition}
\newtheorem{cor}[thm]{Corollary}
\newcommand*{\myproofname}{Proof}
\def\ot{\otimes}
\def\complex{\mathbb{C}}
\def\cI{\mathcal{I}}
\def\cC{\mathcal{C}}
\def\cM{\mathcal{M}}
\def\cD{\mathcal{D}}
\def\sC{\mathscr{C}}
\def\sD{\mathscr{D}}
\def\rS{\mathrm{S}}
\def\cH{\mathcal{H}}
\begin{document}

 \author{Kaifeng Bu}
 \email{bkf@zju.edn.cn}
 \affiliation{School of Mathematical Sciences, Zhejiang University, Hangzhou 310027, PR~China}
  \author{Chunhe Xiong}
 \email{xiongchunhe@zju.edu.cn}
 \affiliation{School of Mathematical Sciences, Zhejiang University, Hangzhou 310027, PR~China}
%
 %
%

\title{A note on cohering power and de-cohering power }

\begin{abstract}
Cohering power and de-cohering power have recently been proposed to quantify the ability of
a quantum operation to produce and erase coherence respectively.
In this paper, we investigate the properties of cohering power and de-cohering power.
First, we prove the equivalence between two different kinds of cohering power   for any quantum operation on single qubit systems, which
implies that $l_1$ norm of coherence is monotone under Maximally incoherent operation (MIO) and Dephasing-covariant operation (DIO) in 2-dimensional space.
In higher dimensions, however, we show that the monotonicity under MIO or DIO  does not hold.
Besides, we compare the set of quantum operations with zero cohering power with Maximally incoherent operation (MIO) and Incoherent operation (IO).
Moreover, two different types of de-cohering power are defined and we find that they are not equal  in single qubit systems.
Finally, we make a comparison between cohering power and de-cohering power for single qubit unitary operations and show that
cohering power is always larger than de-cohering power.
\end{abstract}

\maketitle

\section{Introduction}
Quantum resource theory \cite{Oppenheim13, FBrandao15} plays an important role in  the development and quantitative understanding of various physical phenomena in quantum physics and quantum information theory. A resource theory consists of two basic elements: free operations and free states. Any operation (or state) is  dubbed as a resource if it falls out of the set of free operations (or the set of free states).  The most significant resource theory is the resource theory of quantum entanglement defined on bipartite or multipartite systems \cite{HorodeckiRMP09}, which is a basic resource for various information processing protocols including
superdense coding \cite{Bennett1992} and teleportation \cite{Bennett1993}.
However, for single quantum systems, quantum coherence, which is based
on the superposition rule, must be thought of a peculiar feature of quantum
mechanic just like entanglement in  bipartite systems.  Recently
significant advancements in fields like thermodynamic theory \cite{Rudolph214,Lostaglio2015,Brandao2015,Narasimhachar2015}, quantum biology \cite{Plenio2008,Lloyd2011,Levi14},
has suggested coherence to be a useful resource at the nanoscale, which leads to the development of the resource
theory of coherence
\cite{Baumgratz2014, Yadin2015, Aberg14, Bai2015, Bromley2015, Cheng2015, Chitambar2016,ChitambarPRA2016, ChitambarA2015, Chitambar2015, Du2015, Girolami14, Hu2016, Mani2015, Marvian14, MondalA2015, Napoli2016, Peng2015, Piani2016, Swapan2016, Fan15, Uttam2015, UttamA2015, UttamS2015, Singh2016, Zhang2015, Buc2015,BuE2016, AlexB2015, StreltsovA2015, Xi2015, Yao2015, Winter2016}.

One advantage of
having a resource theory for some physical quantity is the operational quantification of the relevant resources and the resource production through
a quantum operation.
In the resource theory of entanglement,   entangling power \cite{Zanardi2000} of
quantum operations has been proposed to  quantify the ability of quantum
operations to produce entanglement.
Besides,  cohering power and de-cohering power of  quantum operations have also been
proposed to quantify
the ability to produce coherence and erase coherence respectively \cite{Mani2015}. And it has been shown that
the cohering power of single  qubit unitary operations is equal to de-cohering power
in the skew information of coherence \cite{Girolami14}. Two different types of cohering power  have been defined
on the set of  incoherent states and the set of all quantum states respectively,  and it has been proved that
these two  types of cohering power are equal for unitary operation in single qubit case \cite{Bu2015,Diaz2015}. However,
whether this statement can be generalized to any quantum operation in single qubit case remains unclear.
In the present work, we further investigate cohering
power and de-cohering power.
And we prove  that these two types of cohering power  are equal for any quantum operation in 2-dimensional space, which
extends the result on unitary operations \cite{Bu2015,Diaz2015} to general quantum operations.
Besides,  as the cohering power of incoherent operations is always zero, we compare the
sets of quantum operations with zero cohering power with several different free operations for coherence \cite{Hu2016},  namely,  Incoherent operation (IO), Maximally incoherent operation (MIO) and Dephasing-covariant incoherent operation (DIO) \cite{Baumgratz2014,Chitambar2016,ChitambarPRA2016}.
As free operations cannot increase the amount of the relevant resource,  the monotonicity of resource measure under free operations
is crucial to the resource theory. Whether $l_1$ norm of coherence is monotone under MIO and DIO or not is   an open problem
proposed in \cite{Chitambar2016,ChitambarPRA2016}.
In this  work,  we prove that   $l_1$ norm
of coherence is not monotone under MIO or DIO.
Due to this statement, we  demonstrate  the operational gap between DIO and IO in terms of state transformation, which is also an  open problem proposed in
\cite{Chitambar2016,ChitambarPRA2016}.
Furthermore, we derive the exact expression
 for de-cohering power of unitary operations on single qubit systems. Two different kinds of de-cohering power have also  been
 defined
on the set of  maximally coherent  states and the set of all quantum states respectively. We also compare these two  kinds of de-cohering power but find they are not equal in single qubit systems, which is
different from the cohering power.
Finally, we make a comparison between the cohering power and de-cohering power and find that
de-cohering power is always less than the cohering power for unitary operations on single qubit systems.

This work is organized as follows. In Sec.\ref{sec:pre}, we provide
the preliminary material in the resource theory of coherence. We investigate two types of cohering power are equal for any quantum operation in single qubit case.
And  we show that there is no monotonicity for $l_1$ norm of  coherence under MIO  or DIO in
Sec.\ref{sec:coh}. Besides, we derive the explicit formula for de-cohering power and compare two different types of  de-cohering power in Sec.\ref{sec:dech}.
Moveover, we compare the cohering power and the de-cohering power in 2-dimensional space in
Sec.\ref{sec:comp}. Finally, we conclude in Sec.\ref{sec:con}.

\section{Preliminary and notations}\label{sec:pre} 
\noindent
{\it Free states and free operations in the resource theory of coherence ( see \cite{Baumgratz2014} and \cite{Chitambar2016,ChitambarPRA2016})--}
Given a fixed reference basis, say $\{\ket{i}\}$, any state which is diagonal in the reference basis is called an incoherent state. And the   set of all incoherent states is denoted by $\mathcal{I}$. Then we introduce several
different free operations in the resource theory of coherence  from \cite{Baumgratz2014,Chitambar2016,ChitambarPRA2016}.

(1) Incoherent operation (IO).
A quantum operation $\Phi$ is called an incoherent operation if there exists a set of Kraus operators $\{K_n\}$ of $\Phi$  such that $K_n\mathcal{I} K^{\dag}_n\subset \mathcal{I}$ for any $n$.

(2) Maximally incoherent operation (MIO).
A quantum operation $\Phi$ is called a maximally incoherent operation if
$\Phi(\cI)\subset \cI$.

(3) Dephasing-covariant incoherent operation (DIO).
A quantum operation $\Phi$ is called a Dephasing-covariant incoherent operation if
\begin{eqnarray}
\Br{\Delta,\Phi}=0,
\end{eqnarray}
where $\Delta(\rho):=\sum_{i}
\bra{i}\rho \ket{i}\ket{i}\bra{i}$.

\noindent
{\it $l_1$ norm and relative entropy measure (see \cite{Baumgratz2014})-- }
\begin{enumerate}
\item[(i)] $ l_{1}$ norm measure $\cC_{l_{1}}$ is defined by
\begin{eqnarray}
\cC_{l_{1}}(\rho):=\sum_{i\neq j}|\rho_{ij}|.
\end{eqnarray}
\item[(ii)] Relative entropy measure $\cC_{r}$ is defined by
\begin{eqnarray}
\cC_{r}(\rho):= \rS(\rho^{(d)}) - \rS(\rho),
\end{eqnarray}
where $\rS(\rho)=-\mathrm{Tr}{\rho\log\rho}$ is the von Neumann entropy of
$\rho$ and $\rho^{(d)}$ is the diagonal state of $\rho$.
\end{enumerate}

\noindent
{\it Cohering power--}
Two types of cohering power (see \cite{Mani2015} and \cite{Bu2015}):
\begin{eqnarray}
\sC_X(\Phi):&=&\max_{\rho\in\cI}\set{\cC_X(\Phi(\rho))},\\
\widehat{\sC}_X(\Phi):&=&\max_{\rho \in \cD(\cH)}\set{\cC_X(\Phi(\rho))-\cC_X(\rho)}
\end{eqnarray}
where $X$ denotes a  coherence measure  and $\cI$ is the set of
incoherent states. To distinguish these two powers, we call $\sC$ and
$\widehat{\sC}$ the cohering power and generalized cohering power, respectively.
Obviously,  $\sC_X(\Phi)\leq \widehat{\sC}_X(\Phi)$ for any coherence measure X.

\noindent
{\it Formula of cohering power for unitary operations (see \cite{Bu2015} )--}
It has been shown in \cite{Bu2015} that the cohering power for a unitary operation $U=[U_{ij}]_{d\times d}$ can be written as
\begin{eqnarray}
\sC_{l_1}(U)=\norm{U}^2_{1\to 1}-1,
\end{eqnarray}
where $\norm{U}_{1\to 1}=\max\Set{\sum^d_{i=1} \left|U_{ij}\right|:j=1,\ldots,d}$.
And
\begin{eqnarray}
\sC_{r}(U)=\max\set{S(|U_{1i}|^2, |U_{2i}|^2,
\cdots, |U_{di}|^2),i\in[d]},~~~~~~~~
\end{eqnarray}
where $S(\set{p_i})=\sum -p_i\log p_i$.

\noindent
{\it  De-cohering power (see \cite{Mani2015})--}
Two types of decohering power:
\begin{eqnarray}
\sD_X(\Phi):&=&\max_{\rho\in\cM}\set{\cC_X(\rho)-\cC_X(\Phi(\rho))},\\
\widehat{\sD}_X(\Phi):&=&\max_{\rho \in \cD(\cH)}\set{\cC_X(\rho)-\cC_X(\Phi(\rho))}.
\end{eqnarray}
where $X$ denotes a  coherence measure  and $\cM$ is the set of
maximally coherent states. To distinguish them, we call $\sD$ and
$\widehat{\sD}$ the de-cohering power and generalized de-cohering power, respectively.
Clearly,  $\sD_X(\Phi)\leq \widehat{\sD}_X(U)$ for any coherence measure X.
Note that maximally coherent state must be pure state and can be expressed as 
$\ket{\psi}=\frac{1}{\sqrt{d}}\sum_ke^{i\theta_k}\ket{k}$ \cite{Peng2015}.

\section{Results about cohering power}\label{sec:coh}
In view of the definitions,  cohering power and generalized cohering power are different in essence: one is defined on the set
of incoherent states and the other is defined on the set of all quantum states.
As can be seen,  cohering power is always less than
the generalized cohering power. Moreover,
it has been proved that for any unitary operation $U$ on a single qubit system, the cohering
power and  the generalized cohering power coincides, that is,
$\sC_{l_1}(U)=\widehat{\sC}_{l_1}(U)$ \cite{Bu2015}. This means the
maximal coherence produced by unitary operation over all states can be obtained
by considering only the incoherent states which is a smaller set of states.
Here, we generalize this statement to  any quantum operation  $\Phi$ on single
qubit systems.

\begin{prop}\label{prop:e_l1}
For any quantum operation $\Phi$ on a single qubit system, the cohering
power and  the generalized cohering power coincides, that is,
$\sC_{l_1}(\Phi)=\widehat{\sC}_{l_1}(\Phi)$.
\end{prop}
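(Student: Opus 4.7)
The plan is to work in the Bloch representation. Writing any single-qubit state as $\rho=\frac{1}{2}(I+r_x\sigma_x+r_y\sigma_y+r_z\sigma_z)$ with $|\vec{r}|\leq 1$, a direct computation of the off-diagonal entry gives $\cC_{l_1}(\rho)=\sqrt{r_x^2+r_y^2}=|\vec{r}_\perp|$, where $\vec{r}_\perp:=(r_x,r_y)$. Any quantum operation $\Phi$ acts affinely on the Bloch ball via $\vec{r}\mapsto M\vec{r}+\vec{t}$, and partitioning $M$ along the $(x,y)$-vs-$z$ split as $M=\begin{pmatrix} A & \vec{b}\\ \vec{c}^{\,T} & d\end{pmatrix}$ yields $\cC_{l_1}(\Phi(\rho))=|A\vec{r}_\perp+\vec{b}\,r_z+\vec{t}_\perp|$. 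Since incoherent states correspond exactly to $\vec{r}_\perp=\vec{0}$, one obtains $\sC_{l_1}(\Phi)=\max_{r_z\in[-1,1]}|\vec{b}\,r_z+\vec{t}_\perp|$.

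The heart of the argument will be the contraction estimate $\|A\|_{\mathrm{op}}\leq 1$. This will follow from the fact that every CPTP map contracts the trace norm: taking $\rho_1=I/2$ (Bloch vector $\vec{0}$) and $\rho_2$ with Bloch vector $\vec{u}$ gives $\|\rho_1-\rho_2\|_1=|\vec{u}|$ and $\|\Phi(\rho_1)-\Phi(\rho_2)\|_1=|M\vec{u}|$, forcing $\|M\|_{\mathrm{op}}\leq 1$; since $A$ is a coordinate submatrix of $M$, this in turn gives $\|A\|_{\mathrm{op}}\leq 1$.

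Given this bound, the triangle inequality in $\mathbb{R}^2$ closes the argument. For any state $\rho$ with Bloch vector $(\vec{r}_\perp,r_z)$ I would write
\begin{equation*}
\cC_{l_1}(\Phi(\rho)) \leq |A\vec{r}_\perp|+|\vec{b}\,r_z+\vec{t}_\perp| \leq |\vec{r}_\perp|+|\vec{b}\,r_z+\vec{t}_\perp| = \cC_{l_1}(\rho)+\cC_{l_1}(\Phi(\sigma_{r_z})),
\end{equation*}
where $\sigma_{r_z}$ is the incoherent state with Bloch vector $(0,0,r_z)$. Rearranging and taking the supremum over $\rho$ yields $\widehat{\sC}_{l_1}(\Phi)\leq \sC_{l_1}(\Phi)$, and combined with the trivial reverse inequality already noted in the excerpt this gives the claimed equality. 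The main obstacle I anticipate is the step $\|A\|_{\mathrm{op}}\leq 1$: while the contraction argument is short, one must be careful that trace-distance contraction a priori bounds only $|M\vec{u}|$ for $\vec{u}$ realized as a difference of two valid Bloch vectors; however, fixing one endpoint at $I/2$ already exhausts every $\vec{u}$ with $|\vec{u}|\leq 1$, and since the operator norm is attained on the unit sphere this is sufficient, avoiding any detailed parametrization of CPTP qubit maps.
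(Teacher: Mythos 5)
Your proposal is correct, and it reaches the bound $\widehat{\sC}_{l_1}(\Phi)\leq\sC_{l_1}(\Phi)$ by a genuinely different key lemma than the paper. Both arguments share the same skeleton: split $\cC_{l_1}(\Phi(\rho))$ by the triangle inequality into a piece coming from the incoherent part of the input (which is at most $\sC_{l_1}(\Phi)$ by definition) plus a piece transferring input coherence to output coherence, and then show the transfer coefficient is at most $1$. Where you differ is in how that coefficient is controlled. The paper works entirely with Kraus operators: the coefficient appears as $\bigl|\sum_n\overline{K^{(2,1)}_n}K^{(1,2)}_n\bigr|+\bigl|\sum_n\overline{K^{(2,2)}_n}K^{(1,1)}_n\bigr|$, which it bounds by $\tfrac{1}{2}\sum_n\sum_{i,j}|K^{(i,j)}_n|^2=1$ using the arithmetic--geometric mean inequality and the normalization $\sum_nK_n^\dag K_n=\mathbb{I}$; this is elementary and self-contained. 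You instead pass to the affine Bloch representation $\vec{r}\mapsto M\vec{r}+\vec{t}$ and bound the relevant $2\times2$ block $A$ by $\|A\|_{\mathrm{op}}\leq\|M\|_{\mathrm{op}}\leq1$, deducing the latter from contractivity of the trace norm under CPTP maps together with the fact that the qubit trace distance is the Euclidean distance of Bloch vectors (and your remark that anchoring one state at $\mathbb{I}/2$ already sweeps out the full unit ball correctly closes the only subtle point there). The two bounds are in fact the same quantity in disguise --- the map $z\mapsto\alpha\bar z+\beta z$ on $\mathbb{R}^2\cong\mathbb{C}$ has operator norm $|\alpha|+|\beta|$ --- but your route buys conceptual clarity and isolates exactly which structural feature of CPTP maps is responsible (contractivity restricted to the coherence plane), at the cost of invoking the nontrivial monotonicity of the trace distance, whereas the paper's computation is longer but uses nothing beyond the Kraus normalization.
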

\begin{proof}
For any quantum operation $\Phi$ on a single qubit system,  it can be
expressed by a set of Kraus operators $\set{K_n}_n$ as
\begin{eqnarray*}
\Phi(\cdot)=\sum_n K_n\cdot K^\dag_n,
\end{eqnarray*}
where $ K_n=\Br{\begin{array}{ccc}
K^{(1,1)}_n & K^{(1,2)}_n \\
K^{(2,1)}_n & K^{(2,2)}_n
\end{array}}
$ and $\sum_nK^\dag_nK_n=\mathbb{I}$. Any qubit state $\rho$ can be written as
$\rho=\frac{\mathbb{I}}{2}+\frac{1}{2} {\vec{\sigma}}\cdot \vec{r}$, where $\vec{r}=(x, y, z)$ is a unit vector and
${\vec{\sigma}}=(\sigma_x, \sigma_y, \sigma_z)$ is the Pauli
matrices.
Thus, the $l_1$ norm of coherence of initial state $\rho$ and final state
$\Phi(\rho)$ are specified by
\begin{eqnarray*}
\cC_{l_{1}}(\rho)=\abs{x+iy},
\end{eqnarray*}
and
\begin{eqnarray*}
\cC_{l_{1}}(\Phi(\rho))&=&\left|\sum_n[\overline{K^{(2,1)}_n}K^{(1,1)}_n(1+z)+
\overline{K^{(2,2)}_n}K^{(1,2)}_n(1-z)\right.\nonumber\\
&&\left.+\overline{K^{(2,1)}_n}K^{(1,2)}_n(x-i y)
+\overline{K^{(2,2)}_n}K^{(1,1)}_n(x+i y)]
\right|.~~~~~~~
\end{eqnarray*}
Since the cohering power is only defined on incoherent states,  then cohering power of $\Phi$ can be written as
\begin{eqnarray*}
\sC_{l_1}(\Phi)=2\max\left\{\abs{\sum_n\overline{K^{(2,1)}_n}K^{(1,1)}_n},
\abs{\sum_n\overline{K^{(2,2)}_n}K^{(1,2)}_n}\right\}.
\end{eqnarray*}
Since
\begin{eqnarray*}
&&\cC_{l_{1}}(\Phi(\rho))\\
&\leq& \abs{\sum_n\overline{K^{(2,1)}_n}K^{(1,1)}_n(1+z)}+\abs{\sum_n\overline{K^{(2,2)}_n}K^{(1,2)}_n(1+z)}\\
&+&\abs{\sum_n\overline{K^{(2,1)}_n}K^{(1,2)}_n}\abs{x-iy}+\abs{\sum_n\overline{K^{(2,2)}_n}K^{(1,1)}_n}\abs{x+iy}\\
&\leq& 2\max\left\{\abs{\sum_n\overline{K^{(2,1)}_n}K^{(1,1)}_n},
\abs{\sum_n\overline{K^{(2,2)}_n}K^{(1,2)}_n}\right\}\\
&+&\abs{\sum_n\overline{K^{(2,1)}_n}K^{(1,2)}_n}\abs{x-iy}+\abs{\sum_n\overline{K^{(2,2)}_n}K^{(1,1)}_n}\abs{x+iy}\\
&\leq& 2\max\left\{\abs{\sum_n\overline{K^{(2,1)}_n}K^{(1,1)}_n},
\abs{\sum_n\overline{K^{(2,2)}_n}K^{(1,2)}_n}\right\}\\
&+&\sum_n \frac{\sum^2_{i,j=1}\abs{K^{(i,j)}_n}^2}{2} \abs{x+i y}\\
&=&2\max\left\{\abs{\sum_n\overline{K^{(2,1)}_n}K^{(1,1)}_n},
\abs{\sum_n\overline{K^{(2,2)}_n}K^{(1,2)}_n}\right\}\\
&+&|x+iy|,
\end{eqnarray*}
then
\begin{eqnarray*}
&&\cC_{l_{1}}(\Phi(\rho))-\cC_{l_1}(\rho)\\
&\leq & 2\max\left\{\abs{\sum_n\overline{K^{(2,1)}_n}K^{(1,1)}_n},\abs{\sum_n\overline{K^{(2,2)}_n}K^{(1,2)}_n}\right\}\\
&\leq&\sC_{l_1}(\Phi),
\end{eqnarray*}
which implies
\begin{eqnarray*}
\widehat{\sC}_{l_1}(\Phi)\leq \sC_{l_1}(\Phi).
\end{eqnarray*}
Therefore, $\widehat{\sC}_{l_1}(\Phi)= \sC_{l_1}(\Phi)$
for any quantum operation $\Phi$ on qubit system.
\end{proof}

The above proposition is also an evidence that cohering power $\sC_{l_1}$ can be used to quantify the ability of a quantum operation
to generate coherence even if it is only defined on  incoherent states.
Besides, this result can  be used to demonstrate the monotonicity of $l_1$ norm of coherence
under  DIO and MIO in single qubit system directly. However,  monotonicity of $l_1$ norm coherence
under DIO and MIO does not hold in higher dimensional space.

\begin{prop}[Non-monotonicity for $l_1$ norm of coherence under DIO and MIO]\label{thm:no_mont}
In single qubit system, the $l_1$ norm of coherence can not increase under DIO and MIO. However, such
statement is not true in N-qubit system with $N\geq 2$, that is, there exists  a state $\rho_N\in D(\complex^{\otimes N})$  and a DIO (or MIO) $\Phi_N$ such
that $\cC_{l_1}(\Phi_N(\rho_N)) > \cC_{l_1}(\rho_N)$.
\end{prop}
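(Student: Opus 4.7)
The qubit part is an immediate corollary of Proposition~\ref{prop:e_l1}. My plan is to observe that if $\Phi$ is any MIO acting on a single qubit, then by definition $\Phi(\rho)\in\cI$ for every incoherent $\rho$, so $\cC_{l_1}(\Phi(\rho))=0$ on incoherent inputs and hence $\sC_{l_1}(\Phi)=0$. Proposition~\ref{prop:e_l1} then forces $\widehat{\sC}_{l_1}(\Phi)=0$, which is exactly the desired monotonicity $\cC_{l_1}(\Phi(\rho))\le\cC_{l_1}(\rho)$ for every qubit state $\rho$. The DIO case comes for free: if $[\Delta,\Phi]=0$ and $\rho\in\cI$, then $\Phi(\rho)=\Phi(\Delta(\rho))=\Delta(\Phi(\rho))\in\cI$, so DIO$\subseteq$MIO in every dimension, and the same monotonicity applies.

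For $N\ge 2$, the plan is to exhibit a single explicit counter-example on $N=2$ qubits; since tensoring with the identity on the remaining $N-2$ qubits preserves both commutation with the total dephasing and the $l_1$ coherence of any product input, this extends automatically to all $N\ge 2$. Because DIO$\subseteq$MIO, producing one DIO counter-example settles both parts of the statement at once. Working in the $4$-dimensional computational basis $\{\ket{00},\ket{01},\ket{10},\ket{11}\}$, I would look for a channel of the form $\Phi=p\,\mathrm{id}+(1-p)\Psi$ with $\Psi$ a carefully chosen CPTP map that is still DIO (DIO is closed under convex combinations); the key feature of $\Psi$ is that its Kraus operators are individually \emph{not} incoherent but their aggregate action still commutes with $\Delta$. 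A natural input to test on is $\ket{\phi}=\alpha\ket{00}+\beta\ket{11}$, with $\cC_{l_1}(\proj{\phi})=2\alpha\beta$; I would then engineer $\Psi$ so that it redistributes the single off-diagonal component $\out{00}{11}$ into several off-diagonal positions of the output whose moduli sum to strictly more than $2\alpha\beta$, and tune the free parameters $p$ and $(\alpha,\beta)$ to make the strict inequality clean.

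The main obstacle is the balancing act in the construction of $\Psi$: the cross-terms $\overline{K^{(i,j)}_n}K^{(k,l)}_n$ appearing in $\Phi(\rho)$ must cancel on every diagonal entry (to enforce the MIO/DIO condition) while constructively combining in absolute value on the off-diagonal entries of the chosen $\rho$ (to push $\cC_{l_1}$ upward). Once candidate Kraus operators are in hand, the remaining work is routine: I would verify the DIO condition by evaluating $[\Delta,\Phi]$ on each of the $d^2$ matrix units $\out{i}{j}$, and I would compare $\cC_{l_1}$ before and after by direct calculation. Combined with the qubit argument, this would both establish the proposition and, as the introduction anticipates, give an operational separation between IO and DIO/MIO at the level of $l_1$-coherence transformations.
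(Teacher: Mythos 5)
Your treatment of the qubit half is exactly the paper's argument: $\sC_{l_1}(\Phi)=0$ iff $\Phi(\cI)\subset\cI$, so every MIO has zero cohering power, Proposition~\ref{prop:e_l1} upgrades this to zero \emph{generalized} cohering power, and the DIO case follows from $\mathrm{DIO}\subset\mathrm{MIO}$. Your tensor-extension to $N\ge 3$ is also sound and matches the paper (via the multiplicativity $\cC_{l_1}(\tau_1\ot\tau_2)+1=[\cC_{l_1}(\tau_1)+1][\cC_{l_1}(\tau_2)+1]$, the gap even grows rather than merely survives).

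The genuine gap is in the two-qubit counterexample, which is the entire nontrivial content of the proposition (it is what resolves the open problem of \cite{Chitambar2016,ChitambarPRA2016}). You describe a search strategy --- a channel $p\,\mathrm{id}+(1-p)\Psi$ whose Kraus cross-terms cancel on the diagonal but add constructively off the diagonal --- and you explicitly flag ``the balancing act in the construction of $\Psi$'' as the main obstacle, but you never resolve it: no Kraus operators are written down, no input state is fixed, and no inequality is verified. Existence of a DIO that strictly increases $\cC_{l_1}$ cannot be taken for granted; it is precisely what was open. The paper supplies four explicit $4\times 4$ Kraus operators $M_1,\dots,M_4$ (with entries built from $\tfrac{1}{2}$, $\tfrac{1}{2\sqrt 3}$, $\tfrac{1}{\sqrt 2}$, $\tfrac{1}{\sqrt 6}$) forming a DIO, applies it to a state supported on the $\{\ket{00},\ket{01}\}$ block with $\rho_{12}=\rho_{21}>0$, and computes $\cC_{l_1}(\Phi(\rho))=\tfrac{4}{\sqrt 3}\rho_{12}>2\rho_{12}=\cC_{l_1}(\rho)$. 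Until you produce and verify such a concrete $\Psi$ (checking both $\sum_n K_n^\dag K_n=\mathbb{I}$ and $[\Delta,\Phi]=0$ on all matrix units, and then the strict $l_1$ increase), the second half of the proposition is a conjecture with a heuristic, not a proof. A minor additional caution: the convex-combination ansatz $p\,\mathrm{id}+(1-p)\Psi$ buys you nothing here and can only dilute the effect, since $\cC_{l_1}$ is convex; you may as well aim directly for $\Psi$ itself.
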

\begin{proof}
Due to the definition of cohering power, it is easy to see that $\sC_{l_1}(\Phi)=0$ is equivalent to $\Phi(\cI)\subset \cI$, which means
that such $\Phi$ is a MIO. Due to Proposition \ref{prop:e_l1}, we have $\widehat{\sC}_{l_1}(\Phi)=0$ for any MIO $\Phi$ on a single qubit system. Thus, the $l_1$ norm of coherence can not increase under MIO. Since $DIO\subset MIO$, then we also
have the monotone of $l_1$ norm of coherence under DIO in single qubit case.

Next,  we show there exists a DIO $\Phi$ and a state a state $\rho$ such
that $\cC_{l_1}(\Phi(\rho)) > \cC_{l_1}(\rho)$ in 2-qubit system.
Consider the quantum operation $\Phi$ with following Kraus operators
\begin{equation*}
M_1=\Br{\begin{array}{cccc}
0 & \frac{1}{2} & 0 & 0  \\
\frac{1}{2\sqrt{3}} & 0 & 0& 0\\
-\frac{1}{2\sqrt{3}}& 0 & 0& 0\\
\frac{1}{2\sqrt{3}} & 0 & 0& 0
\end{array}},
M_2=\Br{\begin{array}{cccc}
\frac{1}{2\sqrt{3}} & 0 & \frac{1}{\sqrt{2}} &  \frac{1}{\sqrt{6}}  \\
0 & \frac{1}{2} & 0& 0\\
\frac{1}{2\sqrt{3}}& 0 & 0& 0\\
\frac{1}{2\sqrt{3}} & 0 & 0& 0
\end{array}},
\end{equation*}

\begin{equation*}
M_3=\Br{\begin{array}{cccc}
\frac{1}{2\sqrt{3}} & 0 & -\frac{1}{\sqrt{2}} &  \frac{1}{\sqrt{6}}  \\
\frac{1}{2\sqrt{3}} & 0 & 0& 0\\
0 & \frac{1}{2} & 0& 0\\
-\frac{1}{2\sqrt{3}} & 0 & 0& 0
\end{array}},
M_4=\Br{\begin{array}{cccc}
\frac{1}{2\sqrt{3}}& 0 & 0 & -\frac{\sqrt{6}}{3}  \\
-\frac{1}{2\sqrt{3}} & 0 & 0& 0\\
-\frac{1}{2\sqrt{3}}& 0 & 0& 0\\
0 & \frac{1}{2} & 0& 0
\end{array}},
\end{equation*}
It can be easily  verified
such operation $\Phi$ is a DIO according to  \cite{Chitambar2016,ChitambarPRA2016}.
Besides, let us take the state as following
\begin{equation*}
\rho=\Br{\begin{array}{cccc}
\rho_{11} & \rho_{12} & 0 & 0\\
\rho_{21} & \rho_{22} & 0 & 0\\
0 & 0 & 0 & 0\\
0 & 0 & 0 & 0
\end{array}}
\end{equation*}
with $\rho_{12}=\rho_{21}>0$. Then, through some calculation,
 $\cC_{l_1}(\Phi(\rho))=\frac{4}{\sqrt{3}}\rho_{12}$, which is
lager than $\cC_{l_1}(\rho)=2\rho_{12}$. Furthermore,
 for any N qubit system with $N\geq 3$, let us take
 $\Phi_N=\Phi\otimes \mathbb{I}_{N-2}$ and $\rho_{N}=\rho\otimes \sigma_{N-2}$
 where $\mathbb{I}_{N-2}$ denotes the identity operator on the remaining
 (N-2)-qubit system and $\sigma_{N-2}$ is a state of the remaining
 (N-2)-qubit system with $\cC_{l_1}(\sigma_{N-2})>0$.
 It is easily to see that such  $\Phi_N$ is also a
 DIO. Thus,
 \begin{eqnarray*}
 &&\cC_{l_1}(\Phi_N(\rho_N))- \cC_{l_1}(\rho_N)\\
 &=&\cC_{l_1}(\Phi(\rho))\ot \sigma_{N-2})
 -\cC_{l_1}(\rho\ot\sigma_{N-2})\\
 &=&[\cC_{l_1}(\Phi(\rho))-\cC_{l_1}(\rho)][\cC_{l_1}(\sigma_{N-2})+1]\\
 &>&\cC_{l_1}(\Phi(\rho))-\cC_{l_1}(\rho)>0,
 \end{eqnarray*}
 where the second equality comes from the
 multiplicity of $l_1$ norm of coherence, that is
 $\cC_{l_1}(\tau_1\ot\tau_2)+1=[\cC_{l_1}(\tau_1)+1][\cC_{l_1}(\tau_2)+1]$ for any two states $\tau_1$ and $\tau_2$.
 Thus, the $l_1$ norm of coherence is not monotonous under
 DIO in N-qubit system with $N\geq 2$. Since DIO is a subset of MIO,
 it also implies that there is no monotonicity of $l_1$ norm coherence under MIO.

\end{proof}

\begin{cor}
There exists state transformation $\rho\to\sigma$ by DIO which is not possible by IO.
\end{cor}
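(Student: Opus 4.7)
The plan is to extract the corollary directly from the non-monotonicity construction in Proposition~\ref{thm:no_mont}. The key fact I would invoke is the well-established monotonicity of $\cC_{l_1}$ under IO proved in \cite{Baumgratz2014}: for every incoherent operation $\Psi$ and every state $\tau$, one has $\cC_{l_1}(\Psi(\tau))\leq \cC_{l_1}(\tau)$. So the strategy reduces to exhibiting a single pair $(\rho,\sigma)$ such that $\sigma$ is reachable from $\rho$ via some DIO while $\cC_{l_1}(\sigma)>\cC_{l_1}(\rho)$, which by the IO-monotonicity immediately rules out any IO implementation.

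Concretely, I would take $\rho$ to be the two-qubit state from the proof of Proposition~\ref{thm:no_mont} (say with $\rho_{12}=\rho_{21}=\tfrac{1}{2}\rho_{11}=\tfrac{1}{2}\rho_{22}$, or any admissible choice with $\rho_{12}>0$) and set $\sigma:=\Phi(\rho)$, where $\Phi$ is the explicit DIO with Kraus operators $M_1,M_2,M_3,M_4$ written down there. By construction $\rho\to\sigma$ is realized by $\Phi\in\mathrm{DIO}$, and the computation already carried out in the previous proof gives $\cC_{l_1}(\sigma)=\tfrac{4}{\sqrt{3}}\rho_{12}>2\rho_{12}=\cC_{l_1}(\rho)$. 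Hence no incoherent operation $\Psi$ can satisfy $\Psi(\rho)=\sigma$, for otherwise the IO-monotonicity would force $\cC_{l_1}(\sigma)\leq \cC_{l_1}(\rho)$, a contradiction.

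There is essentially no technical obstacle: the heavy lifting is done in Proposition~\ref{thm:no_mont}, and the corollary is a one-line consequence of IO-monotonicity of $\cC_{l_1}$. The only point I would be a bit careful about is to cite the IO-monotonicity of $\cC_{l_1}$ explicitly (it is one of the axioms verified in \cite{Baumgratz2014}) so that the reader sees why IO is excluded while DIO is not, and thereby to make the operational gap between DIO and IO — the open question from \cite{Chitambar2016,ChitambarPRA2016} highlighted in the introduction — fully transparent.
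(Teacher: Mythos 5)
Your proposal is correct and is essentially identical to the paper's own proof: both take the pair $\rho$, $\sigma=\Phi(\rho)$ from the proof of Proposition~\ref{thm:no_mont} and rule out any IO implementation via the IO-monotonicity of $\cC_{l_1}$. Your explicit citation of the monotonicity axiom from \cite{Baumgratz2014} is a minor but welcome clarification over the paper's terser phrasing.
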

\begin{proof}
Let us take the states $\rho$ and $\Phi(\rho)$ given in the Proof of Proposition \ref{thm:no_mont},
then state transformation  $\rho\longrightarrow \sigma=\Phi(\rho)$ is feasible by DIO, but
not possible by IO, as $\cC_{l_1}(\Phi(\rho))>\cC_{l_1}(\rho)$ and
IO can not increase coherence of the states.
\end{proof}
This corollary shows the operational gap between
DIO and IO
in terms of state transformation
which is an open problem proposed in \cite{Chitambar2016,ChitambarPRA2016}.
Besides, the non-monotonicity
of $l_1$ norm coherence under MIO implies that $l_1$ norm is not contracting
under CPTP maps.  Contracting under CPTP maps is an important
property of norms as any norm with such property can usually be  used
as  a potential coherence quantifier in  the resource theory of coherence \cite{Baumgratz2014}. It is striking
 that $l_1$ norm can be employed to quantify coherence although it does not have
 such property.

\begin{cor}
$l_1$ norm is not contracting under CPTP maps, that is,
there exists quantum states $\rho$, $\sigma$ and CPTP map $\Phi$ such that $\norm{\Phi(\rho)-\Phi(\sigma)}_{l_1}>\norm{\rho-\sigma}_{l_1}$, where
$\norm{\rho}_{l_1}:=\sum_{i,j}|\rho_{ij}|$.
\end{cor}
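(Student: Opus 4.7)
The plan is to reduce this corollary directly to the non-monotonicity result in Proposition \ref{thm:no_mont}. The key observation is that, for any operator $A$, the matrix $A-\Delta(A)$ has zero diagonal and its entry-wise $l_1$-norm is exactly the off-diagonal sum $\sum_{i\neq j}\abs{A_{ij}}$. In particular, for any state $\rho$ one has $\norm{\rho-\Delta(\rho)}_{l_1}=\cC_{l_1}(\rho)$. This lets us translate a gap in the coherence measure $\cC_{l_1}$ into a gap in the matrix-entry norm $\norm{\cdot}_{l_1}$, provided the CPTP map under consideration commutes with $\Delta$.

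Concretely, I would take $\Phi$ to be the $2$-qubit DIO constructed with the Kraus operators $M_1,\dots,M_4$ in the proof of Proposition \ref{thm:no_mont}, pick $\rho$ to be the same state used there (with $\rho_{12}=\rho_{21}>0$ and $\rho_{33}=\rho_{44}=0$), and then set
\begin{eqnarray*}
\sigma:=\Delta(\rho).
\end{eqnarray*}
Because $\Phi$ is DIO, $\Br{\Delta,\Phi}=0$, and therefore $\Phi(\sigma)=\Phi(\Delta(\rho))=\Delta(\Phi(\rho))$. Consequently the two differences we care about have no diagonal part:
\begin{eqnarray*}
\rho-\sigma=\rho-\Delta(\rho), \qquad \Phi(\rho)-\Phi(\sigma)=\Phi(\rho)-\Delta(\Phi(\rho)).
\end{eqnarray*}

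Applying the identity $\norm{A-\Delta(A)}_{l_1}=\sum_{i\neq j}\abs{A_{ij}}$ to each side then yields $\norm{\rho-\sigma}_{l_1}=\cC_{l_1}(\rho)$ and $\norm{\Phi(\rho)-\Phi(\sigma)}_{l_1}=\cC_{l_1}(\Phi(\rho))$. Proposition \ref{thm:no_mont} already supplies $\cC_{l_1}(\Phi(\rho))=\frac{4}{\sqrt{3}}\rho_{12}>2\rho_{12}=\cC_{l_1}(\rho)$, so the chain of equalities immediately gives $\norm{\Phi(\rho)-\Phi(\sigma)}_{l_1}>\norm{\rho-\sigma}_{l_1}$, which is the required counterexample.

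There is essentially no hard step: the whole content is the observation that commutation of $\Phi$ with the dephasing map $\Delta$ converts the coherence-measure gap into a matrix-entry-norm gap. The only thing to double-check is the algebraic identity linking $\norm{\cdot}_{l_1}$ of a zero-diagonal matrix to the off-diagonal sum used in $\cC_{l_1}$, which follows directly from the definitions given in the Preliminary section.
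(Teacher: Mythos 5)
Your proof is correct, but it takes a genuinely different route from the paper's. The paper argues by contradiction: it invokes the variational characterization $\cC_{l_1}(\rho)=\min_{\sigma\in\cI}\norm{\rho-\sigma}_{l_1}$ (stated without proof), uses the fact that a MIO maps $\cI$ into $\cI$, and concludes that contractivity would force $\cC_{l_1}(\Phi(\rho))\leq\cC_{l_1}(\rho)$ for every MIO, contradicting Proposition \ref{thm:no_mont}; it never exhibits the pair $(\rho,\sigma)$. You instead construct the witnesses explicitly, taking $\sigma=\Delta(\rho)$ and exploiting the DIO commutation relation $\Phi\circ\Delta=\Delta\circ\Phi$ so that both $\rho-\sigma$ and $\Phi(\rho)-\Phi(\sigma)$ are zero-diagonal matrices whose entrywise $l_1$ norms are exactly $\cC_{l_1}(\rho)$ and $\cC_{l_1}(\Phi(\rho))$. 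Each step you use is elementary and verified ($\Delta$ preserves diagonals, the Kraus operators in Proposition \ref{thm:no_mont} define a DIO, and the strict inequality $\tfrac{4}{\sqrt{3}}\rho_{12}>2\rho_{12}$ is already established there). What your version buys is self-containedness and explicitness: you avoid the unproved distance characterization of $\cC_{l_1}$ and produce concrete states realizing the norm increase. What the paper's version buys is slightly greater generality: its contradiction works for any MIO that violates monotonicity, without needing the dephasing-covariance property (though a near-identical explicit construction with $\sigma=\Delta(\rho)$ also goes through for MIO, since $\Phi(\Delta(\rho))$ is then diagonal and one gets an inequality rather than an equality for $\norm{\Phi(\rho)-\Phi(\sigma)}_{l_1}$). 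Both proofs ultimately rest on the same counterexample from Proposition \ref{thm:no_mont}.
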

\begin{proof}
If $l_1$ norm is contracting under CPTP maps, then for any quantum state $\rho$ and any MIO $\Phi$,
\begin{eqnarray*}
\cC_{l_1}(\rho)&=&\min_{\sigma\in\cI}\norm{\rho-\sigma}_{l_1}\\
&\geq &\min_{\sigma\in \cI}\norm{\Phi(\rho)-\Phi(\sigma)}_{l_1}\\
&\geq& \min_{\sigma\in \cI}\norm{\Phi(\rho)-\sigma}_{l_1}\\
&=&\cC_{l_1}(\Phi(\rho)),
\end{eqnarray*}
which contradicts with Proposition \ref{thm:no_mont}.
\end{proof}

In fact, as the cohering power $\cC_{l_1}$ and $\cC_r$ are both defined on the set of incoherent states $\cI$,
it is easy to see that the quantum operations with zero cohering power in $l_1$ norm of coherence or
 relative entropy of coherence is MIO, that is $ MIO=\set{\Phi:\sC_{l_1}(\Phi)=0}=\set{\Phi:\sC_r(\Phi)=0}$, which means that MIO is the set of all operation that can not increase the coherence of incoherent states.  We also consider the quantum operations with zero generalized cohering power as following,
\begin{eqnarray}
NIO_{l_1}&=&\set{\Phi:\widehat{\sC}_{l_1}(\Phi)=0},\\
NIO_r&=&\set{\Phi:\widehat{\sC}_r(\Phi)=0}.
\end{eqnarray}
Note  that the set  $NIO_{l_1}$( resp. $NIO_r$) is the set of all quantum  operations that will not increase the coherence of all states in
$l_1$ norm of coherence (resp. relative entropy of coherence).
Due to  the definition of generalized cohering power, we have
$NIO_r\subset MIO$.  Since relative entropy of coherence  is monotone under MIO \cite{Chitambar2016,ChitambarPRA2016}, then $MIO\subset NIO_r$, which implies that $MIO=NIO_r$. That is,
MIO is just the set of all quantum  operations that will not increase the coherence of all quantum states in relative entropy measure.  Moreover, we
get the relationship between IO, MIO, $NIO_{l_1}$ and $NIO_r$.

\begin{cor}
The relationship between IO, MIO, $NIO_{l_1}$ and $NIO_r$ in N-qubit system ( $N\geq 2$) is
\begin{eqnarray}
IO \subsetneq  NIO_{l_1}\subsetneq MIO=NIO_r
\end{eqnarray}
However, in single qubit system, the  relationship will become
\begin{eqnarray}
IO\subsetneq NIO_{l_1}=MIO=NIO_r.
\end{eqnarray}
\end{cor}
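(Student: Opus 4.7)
\bigskip

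The plan is to prove the corollary by assembling four pieces: two easy set inclusions, the previously established identity $MIO=NIO_r$, one strict separation coming from Proposition~\ref{thm:no_mont}, and one strict separation between $IO$ and $NIO_{l_1}$ that must be exhibited by hand. First I would record the two inclusions $IO\subseteq NIO_{l_1}\subseteq MIO$ that hold in all dimensions. The left inclusion is the standard monotonicity of $\cC_{l_1}$ under incoherent operations proved in \cite{Baumgratz2014}, which says exactly $\widehat{\sC}_{l_1}(\Phi)=0$ for every $\Phi\in IO$. The right inclusion follows by definition chasing: if $\widehat{\sC}_{l_1}(\Phi)=0$, then for any incoherent $\rho$ one has $\cC_{l_1}(\Phi(\rho))\le \cC_{l_1}(\rho)=0$, so $\Phi(\rho)\in\cI$, i.e.\ $\Phi\in MIO$. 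The identity $MIO=NIO_r$ has already been argued in the paragraph immediately preceding the corollary.

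Next I would handle the dimension-dependent behaviour of the middle inclusion. In the single-qubit case, Proposition~\ref{prop:e_l1} gives $\sC_{l_1}(\Phi)=\widehat{\sC}_{l_1}(\Phi)$ for \emph{every} channel $\Phi$, so $\Phi\in MIO\iff \sC_{l_1}(\Phi)=0\iff\widehat{\sC}_{l_1}(\Phi)=0\iff \Phi\in NIO_{l_1}$, collapsing $NIO_{l_1}$ onto $MIO$. In dimension $\ge 4$ (two qubits or more), the DIO channel $\Phi$ and state $\rho$ explicitly constructed in the proof of Proposition~\ref{thm:no_mont} satisfy $\cC_{l_1}(\Phi(\rho))=\tfrac{4}{\sqrt 3}\rho_{12}>2\rho_{12}=\cC_{l_1}(\rho)$, so $\widehat{\sC}_{l_1}(\Phi)>0$ while $\Phi\in DIO\subset MIO$; this is precisely a witness in $MIO\setminus NIO_{l_1}$, yielding the strict inclusion $NIO_{l_1}\subsetneq MIO$ for $N\ge 2$.

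The main obstacle is the strict inclusion $IO\subsetneq NIO_{l_1}$, which must be argued in both the qubit and the multi-qubit setting. In the qubit case, because $NIO_{l_1}=MIO$, it suffices to exhibit a genuine qubit MIO that admits no incoherent Kraus representation; such a channel is known to exist (a concrete example can be built by taking a suitable non-extremal unital MIO and ruling out all incoherent Kraus decompositions through its Choi matrix, or by invoking the IO-vs-MIO gap discussed in \cite{Chitambar2016,ChitambarPRA2016}). In higher dimensions one can either tensor such a qubit example with identity, or invoke the standard multi-dimensional MIO-not-IO examples from \cite{Chitambar2016,ChitambarPRA2016} and verify that they also lie in $NIO_{l_1}$; the simplest route is to pick a channel whose action is a convex combination of a completely dephasing component and a non-incoherent unitary component chosen so that its $l_{1}$ output norm is bounded by the input. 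The delicate step here is verifying that such a candidate channel really satisfies $\widehat{\sC}_{l_1}(\Phi)=0$ for \emph{all} input states, not just incoherent ones; I would do this by reducing, via the triangle inequality, to checking the inequality on a finite generating set of states, which is the only computation the proof ultimately requires.
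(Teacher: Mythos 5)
Your proof follows essentially the same skeleton as the paper's: the single-qubit collapse $NIO_{l_1}=MIO$ via Proposition~\ref{prop:e_l1}, the strict inclusion $NIO_{l_1}\subsetneq MIO$ for $N\ge 2$ via the DIO counterexample of Proposition~\ref{thm:no_mont}, the identity $MIO=NIO_r$ from the discussion preceding the corollary, and the known qubit MIO-versus-IO gap for $IO\subsetneq NIO_{l_1}$ (the paper simply cites \cite{Hu2016} and the Erratum of \cite{ChitambarPRA2016} for a concrete single-qubit witness rather than sketching a construction). The two elementary inclusions $IO\subseteq NIO_{l_1}\subseteq MIO$ that you record at the outset are correct and are left implicit in the paper.

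The one place your argument goes beyond the paper --- and where it has a genuine flaw --- is the final step establishing $IO\subsetneq NIO_{l_1}$ in $N$-qubit systems. Your proposed verification that a candidate channel satisfies $\widehat{\sC}_{l_1}(\Phi)=0$ ``by reducing, via the triangle inequality, to checking the inequality on a finite generating set of states'' does not work: the condition $\cC_{l_1}(\Phi(\rho))\le\cC_{l_1}(\rho)$ compares two convex (not affine) functions of $\rho$, so verifying it on a generating set implies nothing about convex combinations, let alone about all states. Likewise, tensoring the qubit witness with the identity does not automatically stay inside $NIO_{l_1}$: Proposition~\ref{thm:no_mont} is precisely a warning that $l_1$-monotonicity can fail for a channel acting on part of a larger system with correlated inputs even when it holds on the subsystem alone. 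To be fair, the paper does not supply this step either --- its proof exhibits only the single-qubit witness and then asserts the $N\ge 2$ chain --- so the gap is inherited rather than introduced; but a complete argument would require directly verifying that some specific $N$-qubit channel lies in $NIO_{l_1}\setminus IO$.
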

\begin{proof}

Since $\sC_{l_1}(\Phi)=\widehat{\sC}_{l_1}(\Phi)$ in single qubit system, then
$NIO_{l_1}=MIO$ due to the definition of $NIO_{l_1}$ and the fact $MIO=\set{\Phi:\sC_{l_1}(\Phi)=0}$.
Besides,  it has been demonstrated that there exists a quantum operation on single qubit system $\Phi\in MIO$ but $\Phi\notin IO$ (see \cite{Hu2016} and the Erratum of \cite{ChitambarPRA2016}). Thus $IO\subsetneq NIO_{l_1}=MIO=NIO_r$.

In  N-qubit system ( $N\geq 2$),  $NIO_{l_1}\subsetneq MIO$
comes from    Proposition \ref{thm:no_mont}. Thus, the relationship between IO, MIO and $NIO_{l_1}$ in N-qubit system ( $N\geq 2$) will become
$
IO \subsetneq  NIO_{l_1}\subsetneq MIO=NIO_r
$.

\end{proof}

The above proposition tells us that in single qubit system, MIO is also the set of quantum operations that will not 
increase the coherence of all quantum states in the $l_1$ norm measure, that is,  $NIO_{l_1}$ and $NIO_r$ coincides in this case.
The relationship between these sets may help us understand the role of IO and MIO in the resource theory
of coherence and be complementary to the previous work \cite{Chitambar2016,ChitambarPRA2016}. Besides,
since the relationship between $l_1$ norm of coherence and relative entropy coherence has been considered in \cite{Swapan2016}, we also consider the
relationship between cohering power defined in $l_1$ norm $\sC_{l_1}$ and that defined in relative entropy  $\sC_r$ for unitary operations.
\begin{prop}
Given a unitary operation $U$ in d-dimensional space, we have
\begin{eqnarray}\label{re:l1re}
\sC_{l_1}(U)\geq \max\set{\sC_r(U),2^{\sC_r(U)}-1}.
\end{eqnarray}
\end{prop}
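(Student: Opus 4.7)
My plan is to reduce the inequality (\ref{re:l1re}) to a pair of simple scalar facts about a single probability distribution, and then prove each of them separately. Let $j^\ast$ be the column index realizing the maximum in $\sC_r(U)=\max_i S(|U_{1i}|^2,\ldots,|U_{di}|^2)$; set $p_i:=|U_{i,j^\ast}|^2$ and $\Sigma:=\sum_i|U_{i,j^\ast}|=\sum_i\sqrt{p_i}$. Since $\norm{U}_{1\to 1}\geq\Sigma$, the formula $\sC_{l_1}(U)=\norm{U}_{1\to 1}^2-1$ recalled in the preliminaries yields $\sC_{l_1}(U)\geq\Sigma^2-1$. Hence the proposition reduces to the two claims
\begin{eqnarray*}
(\mathrm{a})\ \Sigma^2\geq 2^{H(p)},\quad (\mathrm{b})\ \Sigma^2-1\geq H(p),
\end{eqnarray*}
valid for any probability distribution $p=(p_1,\ldots,p_d)$, where $H(p)=-\sum_i p_i\log_2 p_i$.

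For (a), I would apply Jensen's inequality (equivalently, the monotonicity of R\'enyi entropies $H_{1/2}\geq H_1$): writing $H(p)=2\sum_i p_i\log_2(1/\sqrt{p_i})$ and using concavity of $\log_2$ gives $H(p)\leq 2\log_2\bigl(\sum_i p_i/\sqrt{p_i}\bigr)=2\log_2\Sigma$, which is (a) after exponentiation.

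For (b) I would argue by induction on $d$. The base case $d=2$ reduces to the classical binary inequality $2\sqrt{p(1-p)}\geq h(p)$ (with $h$ the binary entropy in base $2$), since for $d=2$ one has $\Sigma^2-1=2\sqrt{p_1 p_2}$ and $H(p)=h(p_1)$; the binary inequality itself is a short calculus verification. For the inductive step, split off the first coordinate: the entropy chain rule gives $H(p)=h(p_1)+(1-p_1)H(p')$ with $p'_i:=p_{i+1}/(1-p_1)$, while a direct computation gives $\Sigma^2-1=(1-p_1)(\Sigma'^2-1)+2\sqrt{p_1(1-p_1)}\,\Sigma'$ with $\Sigma':=\sum_{i\geq 2}\sqrt{p'_i}\geq 1$. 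The inductive hypothesis bounds $(1-p_1)(\Sigma'^2-1)$ below by $(1-p_1)H(p')$, and the binary base case together with $\Sigma'\geq 1$ bounds $2\sqrt{p_1(1-p_1)}\,\Sigma'$ below by $h(p_1)$; adding yields (b).

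The main obstacle is (b): the R\'enyi bound used for (a) only yields $\Sigma^2-1\geq 2^{H(p)}-1$, which exceeds $H(p)$ only in the large-entropy regime $H(p)\geq 1$, so it cannot imply (b) on its own. The induction detours around the small-entropy regime by reducing every higher-dimensional case to a two-point distribution, where the binary entropy inequality $h(p)\leq 2\sqrt{p(1-p)}$ supplies exactly the slack needed to absorb the extra $h(p_1)$ produced by the entropy chain rule.
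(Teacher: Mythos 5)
Your proposal is correct, but it takes a genuinely different route from the paper. The paper's proof is essentially a one-line citation: it invokes the pure-state inequality $\cC_{l_1}(\ket{\psi})\geq \max\set{\cC_r(\ket{\psi}),2^{\cC_r(\ket{\psi})}-1}$ from Ref.~\cite{Swapan2016}, observes that $\sC_{l_1}(U)=\max_i\cC_{l_1}(U\ket{i})$ and $\sC_r(U)=\max_i\cC_r(U\ket{i})$, and evaluates both at the optimal column. You perform the same reduction to the column $j^\ast$ maximizing $\sC_r$ (using $\norm{U}_{1\to1}\geq\Sigma$, which is valid since $\Sigma$ is one particular column sum, and $\cC_{l_1}(U\ket{j^\ast})=\Sigma^2-1$, $\cC_r(U\ket{j^\ast})=H(p)$), but then you supply a self-contained proof of the two scalar inequalities that the paper imports as a black box: (a) is exactly the Rényi monotonicity $H_{1/2}\geq H_1$ via Jensen, and (b) follows by your chain-rule induction, whose algebra I checked — the decomposition $\Sigma^2-1=(1-p_1)(\Sigma'^2-1)+2\sqrt{p_1(1-p_1)}\,\Sigma'$ is right, $\Sigma'\geq 1$ holds because $\sqrt{x}\geq x$ on $[0,1]$, and the base case is the true (and standard, though its ``short calculus verification'' is the one place you would still need to write out details) bound $h(p)\leq 2\sqrt{p(1-p)}$. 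You should also dispose of the degenerate case $p_1=1$ separately, where $p'$ is undefined but the claim is trivial. Your correct observation that (a) alone cannot yield (b) in the low-entropy regime $H(p)<1$ is precisely why the induction is needed. What your approach buys is independence from \cite{Swapan2016} and an explicit, elementary derivation; what the paper's approach buys is brevity and a statement that transparently inherits any future strengthening of the pure-state relation between $\cC_{l_1}$ and $\cC_r$.
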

\begin{proof}
Since  $l_1$ norm coherence and relative entropy coherence in pure states has the the following relationship  $\cC_{l_1}(\ket{\psi})\geq \max\set{\cC_r(\ket{\psi}),2^{\cC_r(\ket{\psi})}-1}$\cite{Swapan2016},
 it is easy to see the cohering power $\sC_{l_1}(U)=\max\set{\cC_{l_1}(U\ket{i}):i=1,...,d}$ and $\sC_{r}(U)=\max\set{\cC_r(U\ket{i}):i=1,...,d}$ also satisfy this relationship, that is,
\begin{eqnarray*}
\sC_{l_1}(U)\geq \max\set{\sC_r(U),2^{\sC_r(U)}-1}.
\end{eqnarray*}
\end{proof}
However, whether the cohering power of any quantum operation $\Phi$ satisfy \eqref{re:l1re} is still a question, which is closely related to the open problem:
the  potential relationship  between $l_1$ norm of coherence $\cC_{l_1}$ and relative entropy of coherence $\cC_r$  \cite{Swapan2016}.

\section{Results about de-cohering power}\label{sec:dech}
As mentioned before,  de-cohering power and generalized de-cohering power are defined by the maximization
over the set of maximally coherent states and all quantum states respectively. As both sets contain too many
states, it is difficult to calculate the exact value of de-cohering power and generalized de-cohering power of a given quantum operation.
 Here, we consider a simple case and  give the exact formula
of de-cohering power  and generalized de-cohering power for unitary operations in single qubit case, which makes the comparison
between  de-cohering power  and generalized de-cohering power possible.

\begin{prop}\label{fom:de}
For a qubit unitary operation $U$, which can be expressed as (up to a phase
factor)
$ U= \Br{\begin{array}{ccc}
a & b  \\
-b^{\ast} & a^{\ast}
\end{array}}
$ where $|a|^{2}+|b|^{2}=1$, the de-cohering power
 in $l_1$ norm of coherence and relative entropy of coherence
 can be expressed as
\begin{eqnarray}
\label{eq:dl1}\sD_{l_1}(U)
&=&1-||a|^2-|b|^2|\\
\label{eq:dr}\sD_r(U)&=&1-S(\frac{1}{2}+|ab|,\frac{1}{2}-|ab|)
\end{eqnarray}
And the generalized de-cohering power of $U$ is equal to
the generalized cohering power of $U^\dag$, that is
\begin{eqnarray}
\label{eq:cdl1}\widehat{\sD}_{l_1}(U)&=&\widehat{\sC}_{l_1}(U^\dag)\\
\label{eq:cdr}\widehat{\sD}_{r}(U)&=&\widehat{\sC}_{r}(U^\dag)
\end{eqnarray}

\end{prop}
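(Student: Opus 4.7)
The plan is to handle the de-cohering powers (\ref{eq:dl1}) and (\ref{eq:dr}) by an explicit parameterization of the maximally coherent qubit states, and to reduce the generalized de-cohering powers (\ref{eq:cdl1}) and (\ref{eq:cdr}) to the already-studied generalized cohering power via a simple change of variable.

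\textbf{Step 1: parameterize the input.} Every maximally coherent qubit state has the form $\ket{\psi_\theta}=\frac{1}{\sqrt 2}(\ket 0+e^{i\theta}\ket 1)$ and satisfies $\cC_{l_1}(\ket{\psi_\theta})=\cC_r(\ket{\psi_\theta})=1$. So for both measures the de-cohering power equals $1-\min_\theta \cC_X(U\proj{\psi_\theta}U^\dag)$, and the task reduces to minimizing the post-unitary coherence over the single angle $\theta$.

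\textbf{Step 2: the $l_1$ case.} Writing $U\ket{\psi_\theta}=\alpha\ket 0+\beta\ket 1$ with the given matrix elements of $U$, I would compute the off-diagonal entry $\alpha\bar\beta=\frac{1}{2}(e^{-i\theta}a^{2}-e^{i\theta}b^{2})$ of $U\proj{\psi_\theta}U^\dag$. Then $\cC_{l_1}(U\proj{\psi_\theta}U^\dag)^2=|a|^4+|b|^4-2|a|^2|b|^2\cos(2\alpha_a-2\alpha_b-2\theta)$ for a suitable choice of phases $\alpha_a,\alpha_b$, so the cosine can always be driven to $+1$ by choosing $\theta$ appropriately. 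The minimum is $(|a|^2-|b|^2)^2$, which gives (\ref{eq:dl1}).

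\textbf{Step 3: the relative-entropy case.} For the pure state $U\ket{\psi_\theta}$, the relative entropy of coherence is $S(|\alpha|^2,|\beta|^2)$. A direct expansion yields $|\alpha|^2-|\beta|^2=2\,\mathrm{Re}(e^{-i\theta}a\bar b)$, whose modulus can be tuned up to $2|ab|$ by choice of $\theta$. Since the binary entropy $S(\tfrac12+x,\tfrac12-x)$ is strictly decreasing in $|x|$, its minimum over $\theta$ is $S(\tfrac12+|ab|,\tfrac12-|ab|)$, giving (\ref{eq:dr}).

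\textbf{Step 4: the generalized case by change of variable.} The map $\rho\mapsto U\rho U^\dag$ is a bijection on $\density{\cH}$; so, substituting $\sigma=U\rho U^\dag$ in the definition of $\widehat{\sD}_X(U)$,
\begin{eqnarray*}
\widehat{\sD}_X(U)=\max_{\rho}\Br{\cC_X(\rho)-\cC_X(U\rho U^\dag)}=\max_{\sigma}\Br{\cC_X(U^\dag\sigma U)-\cC_X(\sigma)}=\widehat{\sC}_X(U^\dag),
\end{eqnarray*}
which proves both (\ref{eq:cdl1}) and (\ref{eq:cdr}) in one line and works for any coherence monotone $X$. The main technical obstacle is the careful phase bookkeeping in Step 2; Steps 3 and 4 are essentially routine once the parameterization of Step 1 is in place.
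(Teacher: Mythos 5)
Your proposal is correct and follows essentially the same route as the paper: parameterize the maximally coherent states by a single phase $\theta$, minimize the post-unitary $l_1$ and relative-entropy coherence over $\theta$ (your explicit cancellation $\alpha\bar\beta=\tfrac{1}{2}(e^{-i\theta}a^{2}-e^{i\theta}b^{2})$ is exactly the computation underlying the paper's $\min_\theta|(a+be^{i\theta})(-b^{*}+a^{*}e^{i\theta})|=||a|^{2}-|b|^{2}|$), and obtain the generalized identities by the substitution $\sigma=U\rho U^\dag$. No gaps.
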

\begin{proof}
In single qubit system, the maximal coherent state
can be written as $\ket{\psi}=(\frac{1}{\sqrt{2}},
\frac{1}{\sqrt{2}}e^{i\theta})^t$, where t denotes transposition. Then $U\ket{\psi}$
would be $\frac{1}{\sqrt{2}}(a+be^{i\theta}, -b^*+a^*e^{i\theta})^t$.
Thus
\begin{eqnarray*}
\sD_{l_1}(U)&=&1-\min_{\theta}|(a+be^{i\theta})(-b^*+a^*e^{i\theta})|\nonumber\\
&=&1-||a|^2-|b|^2|.
\end{eqnarray*}
Denote $a=|a|e^{i\theta}$ and $b=|b|e^{i\theta_b}$, then
\begin{eqnarray*}
\sD_{r}(U)&=&1-\min_{\vartheta}S(\frac{1}{2}+|ab|\cos\vartheta, \frac{1}{2}-|ab|\cos\vartheta)\nonumber\\
&=&1-S(\frac{1}{2}+|ab|,\frac{1}{2}-|ab|),
\end{eqnarray*}
where $\vartheta=\theta+\theta_b-\theta_a$.

Besides, in view of the definition of generalized de-cohering power
\begin{eqnarray*}
\widehat{\sD}_{l_1}(U)&=&\max_{\rho \in \cD(\cH)}\set{\cC_{l_1}(\rho)-\cC_{l_1}(U\rho U^\dag)}\\
&=&\max_{\rho \in \cD(\cH)}\set{\cC_{l_1}(U^\dag (U\rho U^\dag) U)-\cC_{l_1}(U\rho U^\dag)}\\
&=&\max_{U\rho U^\dag \in \cD(\cH)}\set{\cC_{l_1}(U^\dag (U\rho U^\dag) U)-\cC_{l_1}(U\rho U^\dag)}\\
&=&\widehat{\sC}_{l_1}(U^\dag).
\end{eqnarray*}

And $\widehat{\sD}_{r}(U)=\widehat{\sC}_{r}(U^\dag)$ can be obtained in a similar way.

\end{proof}

As can be seen from \eqref{eq:cdl1} and \eqref{eq:cdr},
 the amount of   coherence  produced by a unitary operation $U$
 is equal to that of coherence erased by $U^\dag$ ( the reverse process of $U$).
Besides,  the exact formula of de-cohering power in single qubit system makes the
comparison between  de-cohering power and generalized cohering power possible.
According to \eqref{eq:dl1} and \eqref{eq:dr}, the de-cohering power and generalized cohering power
of unitary operation on single qubit system are not equal in general, which is different from
the relationship between cohering power and generalized cohering power.

\begin{prop}
For any unitary operations $U$  on a single qubit system,   $\sD_{l_1}(U)$ and $\widehat\sD_{l_1}(U)$ are not equal in general, that is, there exist a
unitary operation $U_0$ such that $\sD_{l_1}(U_0)
<\widehat\sD_{l_1}(U_0)$.
\end{prop}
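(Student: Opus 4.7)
The plan is to reduce $\widehat{\sD}_{l_1}(U)$ to something computable in closed form using results already established in this section, and then exhibit an explicit one-parameter family in which the comparison with $\sD_{l_1}(U)$ strictly fails.

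First I would invoke equation \eqref{eq:cdl1} from Proposition~\ref{fom:de} to rewrite $\widehat{\sD}_{l_1}(U)=\widehat{\sC}_{l_1}(U^\dag)$. Since the ambient system is a single qubit, Proposition~\ref{prop:e_l1} collapses generalized cohering power to cohering power, so $\widehat{\sC}_{l_1}(U^\dag)=\sC_{l_1}(U^\dag)$. At this point I can use the closed form $\sC_{l_1}(V)=\|V\|_{1\to 1}^2-1$ for unitaries recalled in the preliminaries. For $U=\bigl[\begin{smallmatrix}a & b\\ -b^*& a^*\end{smallmatrix}\bigr]$ the adjoint is $U^\dag=\bigl[\begin{smallmatrix}a^* & -b\\ b^*& a\end{smallmatrix}\bigr]$, so both columns of $U^\dag$ have entry moduli $(|a|,|b|)$, giving $\|U^\dag\|_{1\to 1}=|a|+|b|$. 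Therefore
\begin{eqnarray*}
\widehat{\sD}_{l_1}(U)=(|a|+|b|)^2-1=2|a||b|.
\end{eqnarray*}

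Now combining this with \eqref{eq:dl1}, and setting $x=|a|^2\in[0,1]$ so that $|b|^2=1-x$, the question reduces to whether the inequality $2\sqrt{x(1-x)}>1-|2x-1|$ can be strict for some $x$. The right-hand side is a piecewise linear tent function that vanishes at $x=0,1$ and equals $1$ at $x=1/2$, while the left-hand side is a strictly concave semicircular arc with the same values at those three points. By strict concavity versus piecewise linearity, $2\sqrt{x(1-x)}>1-|2x-1|$ on $(0,1/2)\cup(1/2,1)$.

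To make the statement concrete I would pick $x=1/4$, which gives the explicit witness $U_0=\bigl[\begin{smallmatrix}1/2 & \sqrt{3}/2\\ -\sqrt{3}/2 & 1/2\end{smallmatrix}\bigr]$ with $\sD_{l_1}(U_0)=1-|1/4-3/4|=1/2$ and $\widehat{\sD}_{l_1}(U_0)=2\cdot(1/2)(\sqrt{3}/2)=\sqrt{3}/2>1/2$. There is really no obstacle: once one notices that \eqref{eq:cdl1} together with Proposition~\ref{prop:e_l1} trivializes $\widehat{\sD}_{l_1}$ on qubits, the rest is a one-variable inequality. The only mildly delicate point is to emphasize why the two formulas should disagree in the first place—namely, the maximization for $\sD_{l_1}$ runs only over maximally coherent pure states (which on a qubit is the equator of the Bloch sphere), whereas $\widehat{\sD}_{l_1}$ also sees states on other latitudes, and these may lose more $l_1$ coherence under $U$ than any equatorial state.
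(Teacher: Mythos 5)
Your proposal is correct and follows essentially the same route as the paper: both reduce $\widehat{\sD}_{l_1}(U)$ to $\sC_{l_1}(U^\dag)=2|ab|$ via \eqref{eq:cdl1} and the qubit coincidence of cohering and generalized cohering power, and then compare with $\sD_{l_1}(U)=1-\bigl||a|^2-|b|^2\bigr|$. The only difference is that you supply the explicit witness $|a|^2=1/4$ and verify the one-variable inequality in detail, where the paper simply asserts that such a $U_0$ is easy to find.
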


\begin{proof}
In single qubit system, $\sD_{l_1}(U)
=1-||a|^2-|b|^2|$ and $\widehat{\sD}_{l_1}(U)=\widehat{\sC}_{l_1}(U^\dag)=\sC_{l_1}(U^\dag)=2|ab|$ where
$\widehat{\sC}_{l_1}(U^\dag)=\sC_{l_1}(U^\dag)$ comes from the fact that cohering power coincides with generalized cohering power in
single qubit case \cite{Bu2015}.
Thus it is easy to take an unitary $U_0$
 such that
$\sD_{l_1}(U_0)
<\widehat\sD_{l_1}(U_0)$.

\end{proof}

\begin{prop}
For unitary operations $U$  on single qubit system,   $\sD_{r}(U)$ and $\widehat\sD_{r}(U)$  are not equal, that is, there exist a
unitary operation $U_0$ such that $\sD_{r}(U_0)
<\widehat\sD_{r}(U_0)$.
\end{prop}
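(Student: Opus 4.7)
The plan is to invoke Proposition \ref{fom:de} to reduce $\widehat{\sD}_r$ to a cohering-power computation on $U^\dag$, and then exhibit the strict gap at a single concrete qubit unitary. By Proposition \ref{fom:de},
\begin{eqnarray*}
\sD_r(U)&=&1-S\!\left(\tfrac{1}{2}+|ab|,\tfrac{1}{2}-|ab|\right),\\
\widehat{\sD}_r(U)&=&\widehat{\sC}_r(U^\dag)\;\geq\;\sC_r(U^\dag),
\end{eqnarray*}
the last inequality being the standing bound $\sC_r\leq\widehat{\sC}_r$. It therefore suffices to produce a single qubit unitary $U_0$ with $\sC_r(U_0^\dag)>\sD_r(U_0)$, since this would force $\widehat{\sD}_r(U_0)>\sD_r(U_0)$.

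Next I would apply the preliminary closed form $\sC_r(V)=\max_i S(|V_{1i}|^2,|V_{2i}|^2)$ to $V=U^\dag$. The two columns of $U^\dag$ have squared moduli $(|a|^2,|b|^2)$ and $(|b|^2,|a|^2)$, which yield the same binary entropy, so $\sC_r(U^\dag)=S(|a|^2,|b|^2)$. Setting $p=|a|^2$, the task therefore reduces to verifying
\begin{eqnarray*}
S(p,1-p)+S\!\left(\tfrac{1}{2}+\sqrt{p(1-p)},\,\tfrac{1}{2}-\sqrt{p(1-p)}\right)>1
\end{eqnarray*}
for at least one $p\in(0,1)$.

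Finally I would pick a concrete interior value, e.g.\ $p=3/4$, realized by any qubit unitary with $|a|=\sqrt{3}/2$ and $|b|=1/2$ (for instance the real rotation through $\pi/6$). Direct evaluation of the binary entropies gives $S(3/4,1/4)\approx 0.811$ and $S(1/2+\sqrt{3}/4,\,1/2-\sqrt{3}/4)\approx 0.354$, whose sum exceeds $1$, so $\sD_r(U_0)\approx 0.646<0.811\leq\widehat{\sD}_r(U_0)$. The only subtle point, and essentially the only obstacle, is that at the symmetric boundary values $p\in\{0,1/2,1\}$ the two sides saturate a common bound and the inequality degenerates to equality; hence one must choose a generic interior $p$, and $p=3/4$ is the simplest convenient choice.
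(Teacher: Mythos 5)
Your proof is correct, but it takes a genuinely different route from the paper. The paper exhibits a concrete numerical unitary $U_0$ and an ad hoc mixed state $\rho_0$, evaluates $\cC_r(\rho_0)-\cC_r(U_0\rho_0 U_0^\dag)\approx 0.8327$ against $\sD_r(U_0)\approx 0.7053$, and invokes the trivial lower bound $\widehat{\sD}_r(U_0)\geq \cC_r(\rho_0)-\cC_r(U_0\rho_0U_0^\dag)$. You instead chain Proposition \ref{fom:de} with the standing bound $\widehat{\sC}_r\geq\sC_r$ and the closed form for the cohering power of a unitary, reducing everything to the scalar inequality $H(p)+H\bigl(\tfrac{1}{2}+\sqrt{p(1-p)}\bigr)>1$, which you check at $p=3/4$. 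Your reduction is sound: for $U=\bigl(\begin{smallmatrix} a & b\\ -b^* & a^*\end{smallmatrix}\bigr)$ both columns of $U^\dag$ indeed give the same binary entropy $S(|a|^2,|b|^2)$, and your witness state is implicitly $U^\dag\proj{i}U$ for the optimal incoherent $\ket{i}$, which is a legitimate feasible point for $\widehat{\sD}_r$. What your approach buys is considerable: it connects the statement to the paper's own Lemma \ref{appendix:2}, whose proof shows the minimum of $H(x)+H\bigl(\tfrac12+\sqrt{x(1-x)}\bigr)-1$ is attained only at $x\in\{0,\tfrac12,1\}$, so you get strictness of the gap for \emph{every} qubit unitary with $|a|^2\notin\{0,\tfrac12,1\}$ rather than for a single opaque numerical example; it also makes transparent why the degenerate cases must be excluded. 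The paper's route requires no structural identities but leaves the reader to trust unexplained numerics. Both are valid; yours is the cleaner and more informative argument.
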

\begin{proof}
Since the generalized de-cohering power need to take maximization over all quantum states, it is difficult to get
exact value of $\widehat\sD_{r}$. Thus, a lower bound of the generalized de-cohering power is expected instead of the
exact value.
Consider the following unitary operation and quantum state,
\begin{equation*}\label{counterU}
U_{0}=
\left( \begin{array}{ccc}
  0.5645 + 0.6351i & 0.4141 + 0.3264i\\
  -0.1452 + 0.5069i  & -0.0868 - 0.8452i
\end{array} \right),
\end{equation*}

\begin{equation*}\label{7}
\rho_{0}=
\left( \begin{array}{ccc}
    0.7063    &    0.4338 - 0.1360i\\
   0.4338 + 0.1360i &  0.2937
\end{array} \right),
\end{equation*}
then $\sD_r(U_0)\approx0.7053$ is strictly less than $[\cC_{r}(\rho_{0})-\cC_{r}(U_{1}\rho_{1} U^{\dag}_{1})]\approx
 0.8327$. As $\widehat{\sD}_r(U_0)\geq[\cC_r(\rho)-\cC_r(U_0\rho U^\dag_0)]$, then we prove the result.

\end{proof}

In view of the definition of $\widehat{\sD}$, $\widehat{\sD}(\Phi)=0$ implies
that $\cC(\rho)\leq \cC(\Phi(\rho))$ for any quantum state, that is, quantum operation
will not decrease coherence of any input state. Here, we investigate the set of quantum operations
with zero generalized de-cohering power,
\begin{eqnarray}
NDO_{l_1}&=&\set{\Phi:\widehat{\sD}_{l_1}(\Phi)=0},\\
NDO_r&=&\set{\Phi:\widehat{\sD}_r(\Phi)=0}.
\end{eqnarray}
 Note  that the set  $NDO_{l_1}$( resp. $NDO_r$) is the set of all quantum  operations that will not decrease the coherence of any state in
$l_1$ norm of coherence (resp. relative entropy of coherence).
It is easy to give some quantum operations that belongs to $NDO_{l_1}$ (or $NDO_{r}$), for example, 
take the quantum operation $\Phi$ with Kraus operators $\set{K_i}_i$,  where $K_i=\ket{\Psi}\!\bra{i}$ and $\ket{\Psi}$ is a maximally coherent state, then 
$\Phi$ maps any quantum state to maximally coherent $\ket{\Psi}$. It seems that there is no close relation between $NDO_{l_1}$ (or $NDO_{r}$) and
IO, MIO, as there exists coherence breaking operations \cite{BuE2016} map any quantum state to incoherent states.

\section{Comparison between cohering power and decohering power}\label{sec:comp}
It has been proved that
the cohering power of qubit unitary operations is equal to de-cohering power
in the skew information coherence   \cite{Mani2015}.
Here, we  consider the relationship between
cohering power and de-cohering power  for the unitary operations defined by $l_1$ norm and relative entropy respectively.

\begin{prop}
For any unitary operation $U$ on a single qubit system, the
cohering power is always larger than de-cohering power in $l_1$ norm, that
is $\sC_{l_1}(U)\geq \sD_{l_1}(U)$. However, this relationship does not hold for unitary operations
in higher-dimensional space.
\end{prop}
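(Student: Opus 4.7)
The plan is to attack the qubit statement by purely explicit computation, relying on the closed-form expressions already established earlier in the paper, and then to exhibit a concrete counterexample in dimension $d=3$ to rule out the analogous inequality in higher dimensions.

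First I would write $U=\Br{\begin{array}{cc} a & b \\ -b^{\ast} & a^{\ast}\end{array}}$ with $|a|^{2}+|b|^{2}=1$, and compute the two relevant quantities. The induced $1\to 1$ norm is $\norm{U}_{1\to 1}=|a|+|b|$, since both columns have the same $\ell_{1}$ sum, so the cohering-power formula from Sec.~\ref{sec:pre} gives
\begin{eqnarray*}
\sC_{l_{1}}(U)=(|a|+|b|)^{2}-1=2|a||b|,
\end{eqnarray*}
while Proposition~\ref{fom:de} yields $\sD_{l_{1}}(U)=1-\bigl||a|^{2}-|b|^{2}\bigr|$. The inequality $\sC_{l_{1}}(U)\geq \sD_{l_{1}}(U)$ is then purely algebraic. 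Assuming WLOG that $|a|\geq |b|$ (the other case is symmetric), one has $\bigl||a|^{2}-|b|^{2}\bigr|=|a|^{2}-|b|^{2}=1-2|b|^{2}$, so $\sD_{l_{1}}(U)=2|b|^{2}$, and $2|a||b|\geq 2|b|^{2}$ is equivalent to $|a|\geq |b|$, which holds. This finishes the qubit case.

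For the higher-dimensional part, the plan is to display a single explicit $U$ in $d=3$ violating the inequality. I would take the block unitary
\begin{eqnarray*}
U=\Br{\begin{array}{ccc} 1 & 0 & 0 \\ 0 & \tfrac{1}{\sqrt{2}} & \tfrac{1}{\sqrt{2}} \\ 0 & \tfrac{1}{\sqrt{2}} & -\tfrac{1}{\sqrt{2}} \end{array}}.
\end{eqnarray*}
Its column $\ell_{1}$ sums are $1,\sqrt{2},\sqrt{2}$, hence $\sC_{l_{1}}(U)=(\sqrt{2})^{2}-1=1$. To bound $\sD_{l_{1}}(U)$ from below I would test the maximally coherent input $\ket{\psi}=\tfrac{1}{\sqrt{3}}(\ket{0}+\ket{1}+\ket{2})$, for which $U\ket{\psi}=\tfrac{1}{\sqrt{3}}\ket{0}+\sqrt{\tfrac{2}{3}}\ket{1}$. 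The $\ell_{1}$ coherence of this pure state equals $\bigl(\tfrac{1}{\sqrt{3}}+\sqrt{\tfrac{2}{3}}\bigr)^{2}-1=\tfrac{2\sqrt{2}}{3}$, so $\sD_{l_{1}}(U)\geq 2-\tfrac{2\sqrt{2}}{3}\approx 1.057>1=\sC_{l_{1}}(U)$.

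The qubit step is essentially bookkeeping with the given formulas, so the only real obstacle is producing a transparent counterexample in higher dimension; the block unitary above is attractive because both $\sC_{l_{1}}(U)$ and the reducing map on maximally coherent states can be evaluated by hand, with no numerical optimisation required. One minor subtlety to verify is that the above input really witnesses the de-cohering power, i.e.\ that $\ket{\psi}$ is indeed maximally coherent in the sense of the paper, which is immediate since it is of the form $\tfrac{1}{\sqrt{d}}\sum_{k}e^{i\theta_{k}}\ket{k}$ with $d=3$.
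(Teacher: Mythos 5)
Your proof is correct and follows essentially the same route as the paper: the qubit case is the same algebraic comparison of the closed forms $\sC_{l_1}(U)=2|a||b|$ and $\sD_{l_1}(U)=1-\bigl||a|^2-|b|^2\bigr|$, and the higher-dimensional counterexample is the same Hadamard-block-plus-identity unitary. The only (harmless) difference is that you lower-bound $\sD_{l_1}(U)$ by testing the single equal-phase maximally coherent state instead of carrying out the paper's exact minimization over all phases; your bound $2-\tfrac{2\sqrt{2}}{3}$ in fact coincides with the paper's exact value for $d=3$.
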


\begin{proof}
Since $U$ can be written as $ U= e^{i \varphi}\Br{\begin{array}{ccc}
a & b  \\
-b^{\ast} & a^{\ast}
\end{array}}
$ with
$|a|^{2}+|b|^{2}=1$, the cohering power of $U$ is $\sC_{l_1}(U)=2|ab|$.
And by the definition of the de-cohering power, we have
\begin{eqnarray}
\sD_{l_1}(U)
=1-||a|^2-|b|^2|\leq 2|ab|=\sC_{l_1}(U)
\end{eqnarray}

Take $U$ on d-dimensional system with $d\geq 3$ as following
\begin{eqnarray*}
U=\frac{\sqrt{2}}{2}(\proj{1}+\proj{2}+\ket{1}\!\bra{2}-\ket{2}\!\bra{1})+\sum^d_{k>2}\proj{k},
\end{eqnarray*}
then $\sC_{l_1}(U)=1$ and for maximally coherent state $\ket{\psi}=\frac{1}{\sqrt{d}}\sum_ke^{i\theta_k}\ket{k}$,
$U\ket{\psi}=\frac{1}{\sqrt{2d}}(e^{i\theta_1}+e^{i\theta_2})\ket{1}+\frac{1}{\sqrt{2d}}(e^{i\theta_1}-e^{i\theta_2})\ket{2}
+\frac{1}{\sqrt{d}}\sum^d_{k>2}e^{i\theta_k}\ket{k}$, which implies that 
\begin{eqnarray*}
\sD_{l_1}(U)&=&d-1-\min_{\ket{\psi}\in\cM}\cC_{l_1}(U\ket{\psi})\\
&=&(2-\sqrt{2})(2-\frac{2-\sqrt{2}}{d}).
\end{eqnarray*}
 Moreover,
$\sD_{l_1}(U)$ is larger than $(2-\sqrt{2})(2-\frac{2-\sqrt{2}}{3})$  when
$d\geq 3$. It is easy to check that $(2-\sqrt{2})(2-\frac{2-\sqrt{2}}{3})$
is strictly larger than 1. Thus, we have $\sC_{l_1}(U)<\sD_{l_1}(U)$.
\end{proof}

\begin{cor}
For any unitary operation $U$ on a single qubit system,  we have the following relationship
\begin{eqnarray}
\label{re1}\widehat{\sD}_{l_1}(U)=\widehat{\sC}_{l_1}(U)=\sC_{l_1}(U)\geq \sD_{l_1}(U)
\end{eqnarray}
\end{cor}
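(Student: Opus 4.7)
The plan is to obtain the corollary as a direct chaining of the results already established in the paper, since each of the three claimed (in)equalities reduces to a statement that has just been proved. I will work right-to-left across the displayed chain.

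First I would dispose of the inequality on the right. The immediately preceding proposition gives $\sC_{l_1}(U) \geq \sD_{l_1}(U)$ for every single-qubit unitary $U$, by comparing the explicit formulas $\sC_{l_1}(U)=2|ab|$ and $\sD_{l_1}(U)=1-||a|^2-|b|^2|$ from Proposition \ref{fom:de} and using $|a|^2+|b|^2=1$. I would simply quote this step. Next, the middle equality $\widehat{\sC}_{l_1}(U)=\sC_{l_1}(U)$ is an instance of Proposition \ref{prop:e_l1} applied to the quantum operation $\Phi(\rho)=U\rho U^{\dagger}$, so it requires no new argument at all.

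The only piece that needs a short computation is the leftmost equality $\widehat{\sD}_{l_1}(U)=\widehat{\sC}_{l_1}(U)$. I would start from identity \eqref{eq:cdl1} of Proposition \ref{fom:de}, namely $\widehat{\sD}_{l_1}(U)=\widehat{\sC}_{l_1}(U^{\dagger})$. Applying Proposition \ref{prop:e_l1} to the single-qubit operation $\rho\mapsto U^{\dagger}\rho U$ converts this to $\sC_{l_1}(U^{\dagger})$. Then I would invoke the closed-form expression $\sC_{l_1}(U)=2|ab|$ together with the observation that, when $U$ is parametrized as in Proposition \ref{fom:de}, the dagger simply swaps the roles of the $2\times 2$ block without changing $|a|$ or $|b|$, so $\sC_{l_1}(U^{\dagger})=\sC_{l_1}(U)$. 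Combined with Proposition \ref{prop:e_l1} once more in the forward direction, this yields $\widehat{\sD}_{l_1}(U)=\sC_{l_1}(U)=\widehat{\sC}_{l_1}(U)$.

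There is no real obstacle here since every ingredient is already on the table; the only thing worth checking carefully is the unitary-dagger symmetry $\sC_{l_1}(U^{\dagger})=\sC_{l_1}(U)$, which is a one-line consequence of the explicit formula. I would therefore present the corollary as a brief concatenation of Proposition \ref{prop:e_l1}, equation \eqref{eq:cdl1}, the closed form $\sC_{l_1}(U)=2|ab|$, and the preceding proposition, without introducing any new machinery.
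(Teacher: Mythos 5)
Your proposal is correct and follows essentially the same route as the paper: both reduce the corollary to the single new equality $\widehat{\sD}_{l_1}(U)=\widehat{\sC}_{l_1}(U)$, obtained by chaining $\widehat{\sD}_{l_1}(U)=\widehat{\sC}_{l_1}(U^\dag)$ from Proposition \ref{fom:de} with $\widehat{\sC}_{l_1}(U^\dag)=\sC_{l_1}(U^\dag)=2|ab|=\sC_{l_1}(U)=\widehat{\sC}_{l_1}(U)$, and then quoting the preceding proposition for $\sC_{l_1}(U)\geq\sD_{l_1}(U)$. The only cosmetic difference is that you invoke Proposition \ref{prop:e_l1} where the paper cites the earlier unitary-case result of Bu et al.; the content is identical.
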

\begin{proof}
To prove \eqref{re1}, we only need to prove $\widehat{\sD}_{l_1}(U)=\widehat{\sC}_{l_1}(U)$.
Since $U$ can be written as $ U= e^{i \varphi}\Br{\begin{array}{ccc}
a & b  \\
-b^{\ast} & a^{\ast}
\end{array}}
$ with
$|a|^{2}+|b|^{2}=1$, the cohering power of $U$ is $\sC_{l_1}(U)=2|ab|=\sC_{l_1}(U^\dag)$.
As  we have proved that
$\widehat{\sD}_{l_1}(U)=\widehat{\sC}_{l_1}(U^\dag)$ in Proposition \ref{fom:de}
and $\widehat{\sC}_{l_1}(U)=\sC_{l_1}(U)$ \cite{Bu2015}, we have $\widehat{\sD}_{l_1}(U)=\widehat{\sC}_{l_1}(U^\dag)=\sC_{l_1}(U^\dag)=\sC_{l_1}(U)$.

\end{proof}

\begin{prop}
For any unitary operation $U$ on a single qubit system, the
cohering power is always larger than de-cohering power in relative entropy coherence, that
is $\sC_{r}(U)\geq \sD_{r}(U)$.
\end{prop}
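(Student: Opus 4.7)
The plan is to reduce the claim to a one-variable inequality between binary Shannon entropies and then recognize that inequality as a qubit entropic uncertainty relation. Writing $U$ as in Proposition \ref{fom:de} with $|a|^{2}+|b|^{2}=1$, the column formula for $\sC_{r}$ recorded earlier gives
$\sC_{r}(U) = \max\{S(|a|^{2},|b|^{2}),S(|b|^{2},|a|^{2})\}=S(|a|^{2},|b|^{2})$,
since both columns of $U$ carry the same modulus-square distribution $\{|a|^{2},|b|^{2}\}$. Combining this with $\sD_{r}(U)=1-S(\tfrac12+|ab|,\tfrac12-|ab|)$ from \eqref{eq:dr}, the target $\sC_{r}(U)\geq \sD_{r}(U)$ becomes the scalar inequality
\begin{equation*}
S(|a|^{2},|b|^{2})+S\!\left(\tfrac12+|ab|,\tfrac12-|ab|\right)\geq 1.
\end{equation*}

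Next I would parametrise $|a|=\cos\theta$, $|b|=\sin\theta$ for $\theta\in[0,\pi/2]$. The double-angle identity $2|ab|=\sin 2\theta$ gives $\tfrac12+|ab|=\tfrac12(1+\sin 2\theta)=\cos^{2}(\pi/4-\theta)$, while $|a|^{2}=\cos^{2}\theta$. The inequality above thus becomes
\begin{equation*}
S(\cos^{2}\theta,\sin^{2}\theta)+S\!\left(\cos^{2}(\pi/4-\theta),\sin^{2}(\pi/4-\theta)\right)\geq 1.
\end{equation*}

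The key step is a reinterpretation: let $\ket{\alpha}:=\cos\theta\ket{0}+\sin\theta\ket{1}$ be the associated pure qubit state. Then $\cos^{2}\theta = |\iinner{0}{\alpha}|^{2}$ and $\cos^{2}(\pi/4-\theta)=|\iinner{+}{\alpha}|^{2}$ are exactly the outcome probabilities for measuring $\ket{\alpha}$ in the $Z$ and $X$ eigenbases respectively, so the reduced inequality is precisely the Maassen--Uffink entropic uncertainty relation for the mutually unbiased bases $\{\ket{0},\ket{1}\}$ and $\{\ket{+},\ket{-}\}$, whose maximum overlap is $c=1/\sqrt{2}$ and whose corresponding lower bound is $-2\log c = 1$. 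Equality holds precisely when $\ket{\alpha}$ is an eigenstate of $Z$ or $X$, i.e.\ $\theta\in\{0,\pi/4,\pi/2\}$, matching the boundary cases in which $\sC_{r}(U)$ or $\sD_{r}(U)$ already vanishes.

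The main obstacle is precisely this recognition. A brute-force calculus attack on $f(p):=S(p,1-p)+S(\tfrac12+\sqrt{p(1-p)},\tfrac12-\sqrt{p(1-p)})-1$ is awkward because the square root inside the second entropy makes $f'$ unwieldy near $p=0,1$. The trigonometric substitution together with the identification of $Z$- and $X$-measurement entropies of a common pure state sidesteps this and reduces the whole proposition to a one-line appeal to a standard uncertainty relation. As a self-contained fallback, one can still verify $f\geq 0$ directly by using the symmetry $f(p)=f(1-p)$, the boundary values $f(0)=f(\tfrac12)=0$, and a calculus check that the unique interior critical point gives a non-negative minimum.
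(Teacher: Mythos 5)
Your proposal is correct, and it coincides with the paper's proof up to the point where the claim is reduced to the scalar inequality $S(|a|^{2},|b|^{2})+S(\tfrac12+|ab|,\tfrac12-|ab|)\geq 1$: the paper makes the same identification of $\sC_{r}(U)$ with the column entropy $S(|a|^{2},|b|^{2})$ and uses the same formula \eqref{eq:dr} for $\sD_{r}(U)$. Where you genuinely diverge is in how that inequality is established. The paper proves it from scratch as Lemma~\ref{appendix:2} in the Appendix: after substituting $x=\tfrac{1+\cos\theta}{2}$, $t=\tfrac{1+\sin\theta}{2}$, it differentiates, uses the monotonicity of $g(s)=\tfrac1s\ln\tfrac{1+s}{1-s}$ to show the function increases on $[0,\pi/4]$ and decreases on $[\pi/4,\pi/2]$, and concludes that the minimum is attained at the endpoints, where it vanishes. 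You instead recognize the two entropies as the $Z$- and $X$-measurement entropies of the common pure state $\cos\theta\ket{0}+\sin\theta\ket{1}$ and invoke the Maassen--Uffink entropic uncertainty relation for mutually unbiased qubit bases, with overlap $c=1/\sqrt2$ and bound $-2\log_2 c=1$. Your route is shorter and more conceptual: it explains why the constant is exactly $1$ and cleanly identifies the equality cases $\theta\in\{0,\pi/4,\pi/2\}$, at the price of importing a nontrivial external theorem, whereas the paper's calculus argument is longer but self-contained. One small caveat on your fallback sketch: the unique interior critical point of $f$ is a maximum, not a minimum, so the correct conclusion is that the minimum sits at the boundary where $f$ vanishes --- which is exactly what the paper's appendix lemma establishes.
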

\begin{proof}
Since $U$ can be written as $ U= e^{i \varphi}\Br{\begin{array}{ccc}
a & b  \\
-b^{\ast} & a^{\ast}
\end{array}}
$ with
$|a|^{2}+|b|^{2}=1$, the cohering power of
$\sC_r(U)=S(|a|^2,|b|^2)$. And the de-cohering power
of $U$ is $\sD_r(U)=1-S(\frac{1}{2}+|ab|,\frac{1}{2}+|ab|)$.
Thus, $\sC_{r}(U)\geq \sD_{r}(U)$ is equivalent to
$S(|a|^2,|b|^2)+S(\frac{1}{2}+|ab|,\frac{1}{2}-|ab|)\geq1$.
Due to Lemma \ref{appendix:2} in Appendix, we get the result.

\end{proof}

Although we have proved that  $\sC_{r}(U)\geq \sD_{r}(U)$ and
$\widehat{\sD}_{r}(U)=\widehat{\sC}_{r}(U^\dag)$, we cannot get the similar result like
\eqref{re1} as cohering power $\sC_{r}(U)$ and $\widehat{\sC}_{r}(U)$
are not equal even in single qubit case \cite{Bu2015}. Besides, as the explicit formula for 
de-cohering power $\sD_r$ in higher dimensions is still unknown even for unitary operations, the 
relationship between $\sD_r$ and $\sC_r$ remains  to be identified.

\section{Conclusion}\label{sec:con}
In this work, we have investigated the cohering power and de-cohering power which are defined to quantify the
ability of quantum operations to produce coherence and erase coherence respectively.
It has been proved that cohering power $\sC_{l_1}$ and generalized cohering power $\widehat{\sC}_{l_1}$ are equal for single qubit unitary operations \cite{Bu2015,Diaz2015}.
In this work, we prove that this statement is also true for any quantum operation on single qubit systems, which
implies the monotonicity of $l_1$ norm of coherence under  MIO
on single qubit systems. However, we show that $l_1$ norm of coherence is not monotone under DIO or MIO in higher dimensional space.
Thus we give a complete answer to the open problem about the monotonicity of $l_1$ norm of coherence under MIO proposed in \cite{Chitambar2016,ChitambarPRA2016}.
And the non-monotonicity of $l_1$ norm coherence implies that
$l_1$ norm is not contracting under CPTP maps. Contracting under CPTP maps is a basic property
for norms to be coherence measures \cite{Baumgratz2014},
thus it is amazing  that $l_1$ norm can be employed to quantify coherence
although it does not have
this property.
Besides, we investigate the connections between the sets of operations with zero generalized cohering power $NIO_{l_1}$ and $NIO_{r}$ with IO and MIO:
 $IO\subsetneq NIO_{l_1}=MIO=NIO_r$ in single qubit case and
$IO\subsetneq NIO_{l_1}\subsetneq MIO=NIO_r$ in higher dimensions;  MIO is just the set of all quantum  operations that will not increase the coherence of all states in relative entropy measure.
Moreover, we derive the exact formula of de-cohering power of single unitary operations.
By a comparison between de-cohering power and generalized de-cohering power, we have shown that they are not equal in general
which is different from the coincidence between cohering power
and generalized cohering power in single qubit systems.
Furthermore, we compare  cohering power and de-cohering power defined in $l_1$ norm  and
relative entropy,  and find that
cohering power is usually larger than de-cohering power for unitary operations on single qubit systems.

The results in this work present a new approach to study the free operations in the resource of coherence by cohering power  and therefore, are of great value to our understanding of IO, MIO and DIO proposed in \cite{Baumgratz2014,Chitambar2016,ChitambarPRA2016}.
However, more work is needed in this context. For example,  it will be useful to obtain the relationship between cohering power $\sC_{l_1}$ and $\sC_{r}$ (or
de-cohering power $\sD_{l_1}$ and $\sD_{r}$) for any quantum operation.
Another important question for future studies is to determine the relationship between
cohering power and de-cohering power for any quantum operations on higher dimensions.


\smallskip
\noindent
\begin{acknowledgments}
This work is supported by the Natural Science Foundations of China (Grants No.11171301 and No. 10771191) and the Doctoral Programs Foundation of the Ministry of Education of China (Grant No. J20130061).

\end{acknowledgments}

\appendix

\section{Several useful Lemmas}

\begin{lem}\label{appendix:2}
The function $H(x):=-x\log_2 x-(1-x)\log_2(1-x)$ with $x\in [0,1]$ satisfy
\begin{eqnarray}
H(x)+H(\frac{1}{2}+\sqrt{x(1-x)})\geq 1,
\end{eqnarray}
for any $x\in[0,1]$.
\end{lem}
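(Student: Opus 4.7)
The plan is to recognize this binary-entropy inequality as a qubit entropic uncertainty relation between two mutually unbiased bases, and to prove it by invoking the Maassen--Uffink bound, with a calculus-based backup if a self-contained argument is preferred.

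First I would rewrite the second argument algebraically. Setting $a=\sqrt{x}$ and $b=\sqrt{1-x}$ so that $a^2+b^2=1$, one computes $1/2+\sqrt{x(1-x)}=(a+b)^2/2$ and $1/2-\sqrt{x(1-x)}=(a-b)^2/2$. These are precisely the probabilities obtained by measuring the pure qubit state $|\psi\rangle=a|0\rangle+b|1\rangle$ in the Hadamard basis $\{|+\rangle,|-\rangle\}$, while $(x,1-x)$ is its outcome distribution in the computational basis $\{|0\rangle,|1\rangle\}$. Since every overlap $|\langle i|\pm\rangle|$ equals $1/\sqrt{2}$, the Maassen--Uffink uncertainty relation with overlap constant $c=1/\sqrt{2}$ gives that the sum of the two Shannon entropies is at least $-2\log_2 c=1$, which, translated back to $x$, is exactly the claim $H(x)+H(1/2+\sqrt{x(1-x)})\geq 1$. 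Equality at $x\in\{0,1/2,1\}$ matches the three tight points for MUBs on a qubit, providing a useful consistency check.

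For a proof in the calculus style of the paper, I would instead substitute $x=\sin^2\theta$ and use the identity $1/2+\sin\theta\cos\theta=\cos^2(\pi/4-\theta)$ to rewrite the inequality as $f(\theta):=H(\sin^2\theta)+H(\cos^2(\pi/4-\theta))\geq 1$ on $[0,\pi/2]$. The symmetry $\theta\mapsto\pi/2-\theta$ swaps the two terms, so one can restrict attention to $[0,\pi/4]$. Evaluating $f(0)=1$ and $f(\pi/4)=2$, I would then differentiate, locate the critical points, and use monotonicity between them together with the boundary values to conclude $f\geq 1$ on the whole interval.

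The hard part in the calculus route is the tangential behavior at $x=1/2$ (equivalently $\theta=\pi/4$), where the inequality is saturated by a simultaneous vanishing of the first derivatives of both entropy terms; a naive sign-of-derivative analysis is therefore inconclusive, and one must either do a second-order expansion or invoke a concavity-based comparison to rule out values below $1$ in a neighborhood of this interior tangency. This is precisely why I would favor the Maassen--Uffink argument as the primary proof: the three tangency points $x\in\{0,1/2,1\}$ are handled uniformly by the tightness structure of MUB uncertainty, and no local analysis is required.
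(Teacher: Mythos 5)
Your primary argument is correct and takes a genuinely different route from the paper. The identification is exact: with $a=\sqrt{x}$, $b=\sqrt{1-x}$ the distributions $(x,1-x)$ and $\bigl(\tfrac12+\sqrt{x(1-x)},\tfrac12-\sqrt{x(1-x)}\bigr)$ are the outcome statistics of $a\ket{0}+b\ket{1}$ in the computational and Hadamard bases, and the Maassen--Uffink bound with overlap $c=1/\sqrt{2}$ gives the claimed lower bound of $1$ in one line, with the equality cases $x\in\{0,\tfrac12,1\}$ accounted for by the tightness structure for qubit mutually unbiased bases. The paper instead gives a self-contained elementary proof: writing $t=\tfrac12+\sqrt{x(1-x)}$, the pair $(x,t)$ lies on the circle $(x-\tfrac12)^2+(t-\tfrac12)^2=\tfrac14$, which is parametrized by $x=(1+\cos\theta)/2$, $t=(1+\sin\theta)/2$ with $\theta\in[0,\pi/2]$; the derivative of the entropy sum is controlled by the monotonicity of $g(s)=\tfrac1s\ln\tfrac{1+s}{1-s}$, showing the sum increases on $[0,\pi/4]$ and decreases on $[\pi/4,\pi/2]$, hence attains its minimum value $1$ at the endpoints. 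Note that this parametrization places both tangency points $x=\tfrac12$ and $x=1$ at the \emph{boundary} of the $\theta$-interval, so the interior-tangency obstruction you describe simply does not arise there. Your route buys brevity and a conceptual explanation (the lemma \emph{is} an entropic uncertainty relation), at the cost of importing a nontrivial external theorem; the paper's route is longer but self-contained.

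Two slips in your calculus backup, should you want to complete it: under $x=\sin^2\theta$ the map $\theta\mapsto\pi/2-\theta$ leaves \emph{each} term of $f$ invariant rather than swapping them (the swap is $\theta\mapsto\pi/4-\theta$), and $f(\pi/4)=H(\tfrac12)+H(1)=1$, not $2$ (the second argument becomes $1$ there, so its entropy vanishes; indeed $\max f\approx 1.2$). As written the backup is not a proof, but since the Maassen--Uffink argument is complete this does not affect the correctness of your submission.
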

\begin{proof}
 To prove this inequality is equal to
prove
\begin{eqnarray*}
-x\ln x-(1-x)\ln(1-x)-t\ln t-(1-t)\ln(1-t)\geq \ln2
\end{eqnarray*}
with $t=\frac{1}{2}+\sqrt{x(1-x)}$.
Since the symmetry of the formula, we only need to consider the
the case $x\in[0.5,1]$. As variables $x,t$ satisfy $(x-1/2)^2+(t-1/2)^2=1/2$, we
change the variables $x,t$ to $x=\frac{1+\cos\theta}{2}$ and
$t=\frac{1+\sin\theta}{2}$ with $\theta\in[0,\pi /2]$.
Then we prove the following the inequality:
\begin{eqnarray*}
 f(\theta)= &&-(\frac{1+\cos\theta}{2})\ln (\frac{1+\cos\theta}{2})
 -(\frac{1-\cos\theta}{2})\ln(\frac{1-\cos\theta}{2})\\
 &&-(\frac{1-\sin\theta}{2})\ln (\frac{1-\sin\theta}{2})
 -(\frac{1+\sin\theta}{2})\ln (\frac{1+\sin\theta}{2})\\
&& -\ln2\geq 0
\end{eqnarray*}
with $\theta\in[0,\pi/2]$.
Differentiate $f(\theta)$ with respect to $\theta$, then
\begin{eqnarray*}
\frac{df}{d\theta}
&=&\frac{1}{2}\Br{\sin\theta\ln\frac{1+\cos\theta}{1-\cos\theta}-
\cos\theta\ln\frac{1+\sin\theta}{1-\sin\theta}}\\
&=&\frac{1}{2\sin\theta\cos\theta}
\Br{\frac{1}{\cos\theta}\ln\frac{1+\cos\theta}{1-\cos\theta}
-\frac{1}{\sin\theta}\ln\frac{1+\sin\theta}{1-\sin\theta}}.
\end{eqnarray*}
Consider the function $g(s)=\frac{1}{s}\ln\frac{1+s}{1-s}$ with $s\in[0,1]$.
Then $\frac{dg}{ds}=\frac{1}{s^2}[\ln(1-s)+\frac{1}{1-s}-(\ln(1+s)+\frac{1}{1+s})]>0$, that is, $g(s)$ is a monotonous function.
Thus

(1) when $\theta\in[0,\pi/4]$, then $\cos\theta\geq \sin\theta$.
As the function $g(s)$ is monotonous, thus
$\frac{df}{d\theta}\geq0$.

(2) when $\theta\in[\pi/4,\pi/2]$, then $\cos\theta\leq \sin\theta$.
As the function $g(s)$ is monotonous, thus
$\frac{df}{d\theta}\leq0$.

Therefore, $\min_{\theta\in[0,\pi/2]}f(\theta)
=\min\set{f(0),f(\pi/2)}=0$.
\end{proof}


\bibliographystyle{apsrev4-1}
 \bibliography{coh-break-lit}

\end{document}